\def\BibTeX{{\rm B\kern-.05em{\sc i\kern-.025em b}\kern-.08em
    T\kern-.1667em\lower.7ex\hbox{E}\kern-.125emX}}
\begin{document}

%\title{Securing Pool Selection in Federated Learning:  Attack and Defense Strategies}
\title{Blockchain-based Secure Client Selection in Federated Learning}

% \author{\IEEEauthorblockN{1\textsuperscript{st} Given Name Surname}
% \IEEEauthorblockA{\textit{dept. name of organization (of Aff.)} \\
% \textit{name of organization (of Aff.)}\\
% City, Country \\
% email address}
% \and
% \IEEEauthorblockN{2\textsuperscript{nd} Given Name Surname}
% \IEEEauthorblockA{\textit{dept. name of organization (of Aff.)} \\
% \textit{name of organization (of Aff.)}\\
% City, Country \\
% email address}
% \and
% \IEEEauthorblockN{3\textsuperscript{rd} Given Name Surname}
% \IEEEauthorblockA{\textit{dept. name of organization (of Aff.)} \\
% \textit{name of organization (of Aff.)}\\
% City, Country \\
% email address}
% \thanks{The first two authors contribute equally to this paper.}
% }

\author{
\IEEEauthorblockN{Truc Nguyen\IEEEauthorrefmark{1} --  Phuc Thai\IEEEauthorrefmark{2}, Tre' R. Jeter\IEEEauthorrefmark{1}, Thang N. Dinh\IEEEauthorrefmark{2} and My T. Thai\IEEEauthorrefmark{1}}

\IEEEauthorblockA{\IEEEauthorrefmark{1}Department of Computer \& Information Science \& Engineering, University of Florida, Gainesville, FL, 32611}
\IEEEauthorblockA{\IEEEauthorrefmark{2}Department of Computer Science, Virginia Commonwealth University, Richmond, VA, 23284\\
Email: truc.nguyen@ufl.edu, thaipd@vcu.edu, t.jeter@ufl.edu, tndinh@vcu.edu, mythai@cise.ufl.edu}
\thanks{The first two authors contribute equally to this paper.}
}

\IEEEoverridecommandlockouts
\IEEEpubid{\makebox[\columnwidth]{978-1-6654-9538-7/22/\$31.00~\copyright2022 IEEE \hfill} \hspace{\columnsep}\makebox[\columnwidth]{ }}

\maketitle

\IEEEoverridecommandlockouts
\IEEEpubidadjcol

\newtheorem{theorem}{Theorem}[section]
\newtheorem{definition}[theorem]{Definition}
\newtheorem{claim}[theorem]{Claim}
\newtheorem{remark}[theorem]{Remark}
\newtheorem{lemma}[theorem]{Lemma}
\newtheorem{corol}[theorem]{Corollary}

\newcommand{\pnote}[1]{{\color{blue}{\bf [Phuc: #1]\\}}}

\newcommand{\mybox}[5]{
	\begin{figure}[htbp!]%[htb!]%[tp]
		\small
		\centering
		\begin{tikzpicture}
		\node[anchor=text,text width=\columnwidth-.4cm, draw, rounded corners, line width=1pt, fill=#3, inner sep=2mm] (big) {\\#4};
		\node[draw, rounded corners, line width=.5pt, fill=#2, anchor=west, xshift=5mm] (small) at (big.north west) {#1};
		\end{tikzpicture}
		\caption{#5}
	\end{figure}
}

\newcommand{\VRFgen}{\ensuremath{\mathsf{VRFgen}}}
\newcommand{\VRFprove}{\ensuremath{\mathsf{VRFprove}}}
\newcommand{\VRFverify}{\ensuremath{\mathsf{VRFverify}}}
\newcommand{\sk}{\ensuremath{\mathsf{sk}}}
\newcommand{\pk}{\ensuremath{\mathsf{pk}}}
\newcommand{\voutput}{\ensuremath{\sigma}}
\newcommand{\vproof}{\ensuremath{{\pi}}}
\newcommand{\rlen}{\ensuremath{\tau}}
\newcommand{\block}{\ensuremath{B}}
\newcommand{\cin}{\stackrel{?}{\in}}
\newcommand{\Verifiable}{\ensuremath{\textsf{Verifiable}}\xspace}
\newcommand{\PoolVerify}{\ensuremath{\textsf{PVer}}\xspace}
\newcommand{\calU}{\mathcal{U}}
\newcommand{\calS}{\mathcal{S}}
\newcommand{\calP}{\mathcal{P}}
\newcommand{\calH}{\mathcal{H}}
\newcommand{\Chain}{\ensuremath{\mathsf{C}}}
\newcommand{\state}{\ensuremath{\mathsf{st}}}
\newcommand{\bheader}{\ensuremath{\mathsf{H}}}
\newcommand{\hash}{\ensuremath{\mathsf{hash}}}
\newcommand{\Merkleroot}{\ensuremath{\mathsf{MRoot}}}
\newcommand{\MerklePath}{\ensuremath{\mathsf{MProof}}}
\newcommand{\Merkle}{\ensuremath{\mathsf{Merkle}}}
\newcommand{\cA}{\ensuremath{\mathcal{A}}\xspace}
\newcommand{\cZ}{\ensuremath{\mathcal{Z}}\xspace}
\newcommand{\ledger}{\ensuremath{\mathsf{L}}\xspace}
\newcommand{\tx}{\ensuremath{\mathsf{tx}}\xspace}
\newcommand{\Gen}{\ensuremath{\textsf{Gen}}\xspace}
\newcommand{\Verify}{\ensuremath{\textsf{Verify}}\xspace}
\newcommand{\Prove}{\ensuremath{\mathtt{Prove}}\xspace}
\newcommand{\SK}{\ensuremath{\mathsf{sk}}}
\newcommand{\PK}{\ensuremath{\mathsf{pk}}}
\newcommand{\negl}{\ensuremath{\mathsf{negl}}\xspace}
\newcommand{\secp}{\ensuremath{\kappa}\xspace}
\newcommand{\PPT}{\ensuremath{\textsc{ppt}}\xspace}
\newcommand{\rand}{\ensuremath{\mathsf{rnd}}}
\newcommand{\pool}{\ensuremath{\mathsf{S}}}
\newcommand{\sprob}{\ensuremath{c}}
\begin{abstract}
Despite the great potential of Federated Learning (FL) in large-scale distributed learning, the current system is still subject to several privacy issues due to the fact that local models trained by clients are exposed to the central server. Consequently, secure aggregation protocols for FL have been developed to conceal the local models from the server. However, we show that, by manipulating the client selection process, the server can circumvent the secure aggregation to learn the local models of a victim client, indicating that secure aggregation alone is inadequate for privacy protection. To tackle this issue, we leverage blockchain technology to propose a verifiable client selection protocol. Owing to the immutability and transparency of blockchain, our proposed protocol enforces a random selection of clients, making the server unable to control the selection process at its discretion. We present security proofs showing that our protocol is secure against this attack. Additionally, we conduct several experiments on an Ethereum-like blockchain to demonstrate the feasibility and practicality of our solution.
\end{abstract}
\section{Introduction}
In recent years, Federated Learning (FL) has emerged as an auspicious large-scale distributed learning framework that simultaneously offers both high performance in training models and privacy protection for clients. FL, by design, allows millions of clients to collaboratively train a global model without the need of disclosing their private training data. In each training round, a central server distributes the current global model to a random subset of clients who will train locally and upload model updates to the server. Then, the server averages the updates into a new global model. FL has inspired many applications in various domains, including  training mobile apps \cite{hard2018federated,yang2018applied}, self-driving cars \cite{posner2021federated, khan2021dispersed}, digital health \cite{ching2018opportunities,rieke2020future}, and smart manufacturing \cite{konevcny2016federated, hao2019efficient}. 

Although training data never leaves clients' devices, data privacy can still be leaked by observing the local model updates and conducting some attacks such as membership inference \cite{shokri2017membership,salem2018ml}. Thus, FL is not particularly secure against an honest-but-curious server. To address this issue, recent research has focused on developing a privacy-preserving FL framework by devising secure aggregation on the local models \cite{bonawitz2017practical,aono2017privacy,zhang2020batchcrypt}. Specifically, it enables the server to privately combine the local models in order to update the global model without learning any information about each individual local model. As a result, the local model updates are concealed from the server, thereby preventing the server from exploiting the updates of any client to infer their private training data.%a client can share a local model update knowing the server can only see that update only after it has been averaged with those of other clients.

However, in this paper, we exploit a gap in the existing secure aggregation and show that they are inadequate to protect the data privacy. %local model updates. 
Particularly, we demonstrate that a \textit{semi-malicious} server can circumvent a secure aggregation to learn the local model updates of a victim client via our proposed \textit{biased selection attack}. Intuitively, our attack leverages the fact that the central server in FL has a freedom to select any pool of clients to participate in each training round. Hence, it can manipulate the client selection process to target the victim and extract their update from the output of the secure aggregation protocol. We present two different strategies to conduct the biased selection attack, and show experimentally that the server can successfully infer some information about the victim's private training data without making any additional security assumptions about the capabilities of the server. 

To counter this attack, we focus on strictly enforcing a random selection of clients on the central server, thereby preventing it from manipulating the selection process at its discretion. To this end, we propose using blockchain as a public trust entity and devise a verifiable random selection protocol for the server to randomly select a pool of clients in each training round. Specifically, we utilize the blockchain as a source of randomness that is used to determine the pool of clients that will participate in a training round. Via the immutability of blockchain, the clients can verify the correctness of the random selection protocol, i.e., ensuring that they are indeed randomly selected. To demonstrate the feasibility of our solution, we concretely prove that our protocol is secure against the biased selection attack. We also benchmark the performance  of the proposed protocol with   an Ehtereum-like blockchain and show that it imposes minimal overhead on FL. 

\noindent\textbf{Contributions.} Our contributions are summarized as follows:

\begin{itemize}
    \item We propose the biased selection attack where the server learns the local model updates of a victim in spite of secure aggregation. We describe two strategies to perform this attack without making extra security assumptions on the server. Then, we conduct some experiments to demonstrate its viability with respect to inferring some information about the victim's training data.
    \item As a countermeasure, we devise a verifiable random selection protocol for the server to randomly select clients in each training round. Our protocol leverages blockchain as a source of randomness so that the clients can verify whether the server correctly follows the selection protocol. Therefore, it enforces a random selection of clients, making the biased selection attack infeasible.
    \item We present concrete security proofs to show that the proposed protocol is secure against the attack. We also analyze the communication and computation cost of the protocol, together with some benchmarks to show that its overhead on FL is minimal.
\end{itemize}

\noindent\textbf{Organization.} The rest of the manuscript is structured in the following manner. Section~\ref{sec:prelim} establishes the preliminaries for our paper. We present the biased selection attack in Section~\ref{sec:attack}. Section~\ref{sec:prot} describes our proposed client selection protocol. We then provide security and performance analysis in Section~\ref{sec:analysis}. Experiments to evaluate our solution are given in Section~\ref{sec:exp}. We discuss some related work in Section~\ref{sec:rel} and finally provide concluding remarks in Section~\ref{sec:con}.

%In this work, we consider \emph{semi malicious model}. Assume the server behaves honestly in the first step of the protocol where it commits the PKs of clients. After this point, it can deviate arbitrarily from the protocol.
\vspace{-0.12in}
\section{Preliminaries}\label{sec:prelim}
\vspace{-0.05in}
\subsection{Federated Learning and Secure Aggregation}
\vspace{-0.05in}
% Federated Learning is a training approach for deep learning models that allows participants to collaboratively build shared models based on their private data without disclosing them. Federated learning divides deep neural network training among various participants by iteratively aggregating local models into a joint global model \cite{mcmahan2017communication}. Local training data never leaves participants' workstations, allowing federated models to train on sensitive private data, such as participants' written messages, that is significantly different from publicly accessible datasets.

Depending on how training data is distributed among the participants, there are two main versions of federated learning: horizontal and vertical. In this paper, we focus on a horizontal setting in which different data owners hold the same set of features but different sets of samples. %On the other hand, data owners in vertical federated learning possess different sets of features but the same set of samples, such that the different sets of features held by different data owners are aligned by a unique sample identifier.

Typically, an FL process follows the FedAvg framework \cite{mcmahan2017communication} which comprises multiple rounds. In this setting, a server and a set $\calU$ of $n=|\calU|$ clients participate in a collaborative learning process.
Each client $u\in \calU$ holds a training dataset $D_u$ and agrees on a single deep learning task and model architecture to train a global model. A central server $\calS$ keeps the parameters $G^t$ of the global model at round $t$. Let $x^t_u$ be a vector 
%of dimension $m$ 
representing the parameters of the local model of client $u$ at round $t$. Each training round includes the following phases:
\begin{enumerate}
	\item \emph{Client selection}:  $\calS$ samples a subset of $m$ clients $\calU' \subseteq \calU$ and sends them the current global model $G^t$.
	\item \emph{Client computation}: each selected client $u\in \calU'$ updates $G^t$ to a new local model $x^{t}_u$ by training on their private data $D_u$, and uploads $x^{t}_u$ to the central server $\calS$.
	\item \emph{Aggregation}: the central server $\calS$ averages the received local models to generate a new global model as follows:
    \begin{equation}\label{equ:globalcompute}
        G^{t+1} = \frac{1}{m} \sum_{u\in \calU'} x_u^{t}
    \end{equation}
\end{enumerate}
The training continues until the global model converges.
 
% There are two possibilities for the position of an adversary in federated learning: The adversary can be the central server, or a subset of clients. A semi-malicious server can receive updates from each individual client over time $x_u^t$, and use them to infer information about the training set of each client \cite{fredrikson2015model,abadi2016deep}. The server can also control the Client selection process, and can act actively to extract more information about the training set of a client. Alternatively, the adversary can be a subset of the clients. A malicious client can observe the global parameters over time $G^t$, and craft his own
% adversarial parameter updates $x_u^{t}$ to conduct some attacks to the global model, such as backdoor attacks \cite{bagdasaryan2020backdoor}.

To counter several attacks conducted based on the local model updates of clients, such as inference attacks by the server \cite{fredrikson2015model,abadi2016deep}, the Aggregation phase can be replaced by a \emph{secure aggregation} protocol such that each $x_u^t$ is not exposed to the server \cite{bonawitz2017practical,aono2017privacy,zhang2020batchcrypt}. By leveraging cryptographic secure multiparty computation (SMC), the secure aggregation protocols can guarantee that the server cannot learn any information about each local model update, but still be able to construct the sum of all updates. Specifically, with secure aggregation, the equation (\ref{equ:globalcompute}) is replaced by:
\begin{equation} \label{equ:secureagg}
    G^{t+1} = \frac{1}{m} \prod_{\{x_u^t | u\in \calU'\}}\left[\sum_{u\in \calU'} x_u^{t}\right]
\end{equation}
where $\prod_{X}\left[f(X,\cdot)\right]$ denotes an abstract secure computation protocol on some function $f(X,\cdot)$ and $X$ is a private input. The protocol $\prod_{X}\left[f(X,\cdot)\right]$ is: (1) \textit{correct} if it outputs the same value as $f(X,\cdot)$, and is: (2) \textit{secure} if it does not reveal $X$ during the execution of the protocol.% under some threat models.

\vspace{-0.08in}
\subsection{Blockchain}
Blockchain,  introduced in \cite{nakamoto2008bitcoin},  is a type of distributed ledger, jointly maintained by a set of nodes in a network, called miners.  Blockchain can provide guarantees on the correctness (i.e. tamper-resistance) and security of the ledger without the need of trust on a central trusted party. 
%The miners follow a consensus protocol  to maintain a block data structure.% (hence, the name blockchain).

A consensus protocol for maintaining blockchain is called secure if it satisfies the   following two security properties: 1) \emph{persistence}: all honest  miners have the same view of the ledger; and 2) \emph{liveness}: the valid transactions will eventually be added to the ledger. 

In this work, we consider a proof-of-work (PoW) blockchain, in which miners compete to solve a PoW puzzle. The miner who solves the puzzle can append a new block into a blockchain data structure. The PoW blockchain is shown to be  secure under the assumption that the honest miners hold the majority of mining power \cite{garay2015bitcoin}. The security of the protocol is parameterized by 
%the security parameter $\kappa$ 
the length of the hash function $\kappa \in \mathds{N}$ \cite{garay2015bitcoin}, called \emph{security parameter}.

The blockchain is used in our   client selection protocol to ensure 1) all clients in FL have the same views on the selected clients, and 2) a provably random selection of the client. 
\subsection{Verifiable random function}

%\noindent\emph{Verifiable random function.} 
{
To implement the provable random client selection, we use a cryptographic tool called \emph{verifiable random functions} (VRF) \cite{dodis2005verifiable}. 
VRF is  a public-key pseudorandom function that provides proofs showing that its outputs were calculated correctly and randomly, i.e., hard to predict. Consider a user with secret and public keys $\sk$ and $\pk$. The user can use VRF to generate a function output $\sigma$ and a proof $\vproof$ for any input value $x$ by running a function $\VRFprove_ \sk(x)$. 
Everyone else, using the proof $\vproof$ and the public key $\pk$, can check that the output $\sigma$ was calculated correctly by calling a function $\VRFverify(\pk, \voutput, \vproof)$. Yet, the proof $\vproof$ and the output $\sigma$ does not reveal any information on the secret key $\sk$.

In our protocol, the input value $x$ in the VRF is a randomness $\rand$, extracted from the blockchain.
Each client $i \in \calU$ independently computes an VRF output $\sigma_i$ on the input value $\rand$ to determine whether or not $i$ is selected into the pool. }

\section{Biased selection attack and Secure client selection problem}\label{sec:attack}

% \subsection{Threat model}
% Our threat model largely follows that of previous work in secure aggregation \cite{bonawitz2017practical,zhang2020batchcrypt}. We consider a computationally bounded adversary that can corrupt the server or any subset of clients as follows.

This section describes a simple yet effective biased selection attack and defines necessary properties of a secure client selection. First, we establish the threat model as follows.

\noindent\textbf{Threat model.} Our threat model extends that of previous work on secure aggregation in FL \cite{bonawitz2017practical,zhang2020batchcrypt,aono2017privacy}. Instead of an honest-but-curious server, we consider a \textit{semi-malicious} server that honestly follows the training protocol of FL, except that it tries to manipulate the selection process to its advantage. We assume that a secure aggregation protocol is used such that the server learns nothing other than the sum of the model updates in each training round as in equation (\ref{equ:secureagg}).
% We consider both cases where the server does or does not collude with a small subset of clients. 
The server can collude with a subset of clients. We denote by $\beta \in (0, 1)$ an upper-bound on the fraction of colluding clients.
 The goal of the server is to learn the parameters of the victim's local model updates, from which it can infer some properties about the victim's training data.

\vspace{-0.08in}
\subsection{Biased selection attacks}

% Due to the fact that the server in federated learning has full control over the clients selection in each training round, the server can potentially manipulate the selection process to violate user privacy in spite of secure aggregation. In specific, the server can learn about the parameters of a target victim's local model, essentially circumventing the privacy guarantee of secure aggregation. This section describes strategies that the server can employ to conduct such attacks.

We present two different strategies to conduct this attack. First, we show that the server can viably collude with some clients $\Bar{\calU} \subset \calU$ to learn the local models of a victim $v \in \calU \setminus \Bar{\calU}$. Second, we demonstrate that the server can still learn some information about the victim's local model $x_v$ and conduct inference attacks even without colluding with some clients. %\mt{let not call malicious clients. Call something that show these clients work with server. Malicous sound like they will attack the server}

\noindent\textbf{Colluding attack.}  Let $\Bar{\calU} \subset \calU$ be the set of clients with which the server can collude. At a particular round $t$, the server can extract the victim's local model $x_v^t$ as follows:
\begin{enumerate}
    \item The central server $\calS$ selects the victim $v$ and a subset of colluding clients $\calU' \subseteq \Bar{\calU}$ and sends them the current global model $G^t$.
    \item The selected clients compute their local model updates as normal. However, each selected colluding client $u\in \calU'$ secretly shares their $x_u^t$ with the server.
    \item The server, via a secure aggregation protocol, obtains $S = \prod_{\{x_u^t | u\in \calU' \cup v\}}\left[\sum_{u\in \calU' \cup v} x_u^{t}\right]$, which is the sum of the clients' local models as in equation (\ref{equ:secureagg}). Since the server knows the local models of the colluding clients, i.e., $x_u^t$ for $u\in \calU'$, it can extract the victim's model as $x_v^t = S - \sum_{u\in \calU'} x_u^t$. 
\end{enumerate}

Note that the server cannot solely select the victim as the only client in the training round since certain secure aggregation protocols require some form of communication between the selected clients \cite{bonawitz2017practical}. Hence, the server has to select some clients that it can collude with, i.e., $\Bar{\calU} \neq \emptyset$. In fact, those colluding clients are not necessarily real devices in the system, but can be some Sybil clients created by the server. Therefore, it is viable for the server to have some clients to collude with and conduct this attack to extract the victim's model.

\begin{figure}
    \centering
    \includegraphics[width=\linewidth]{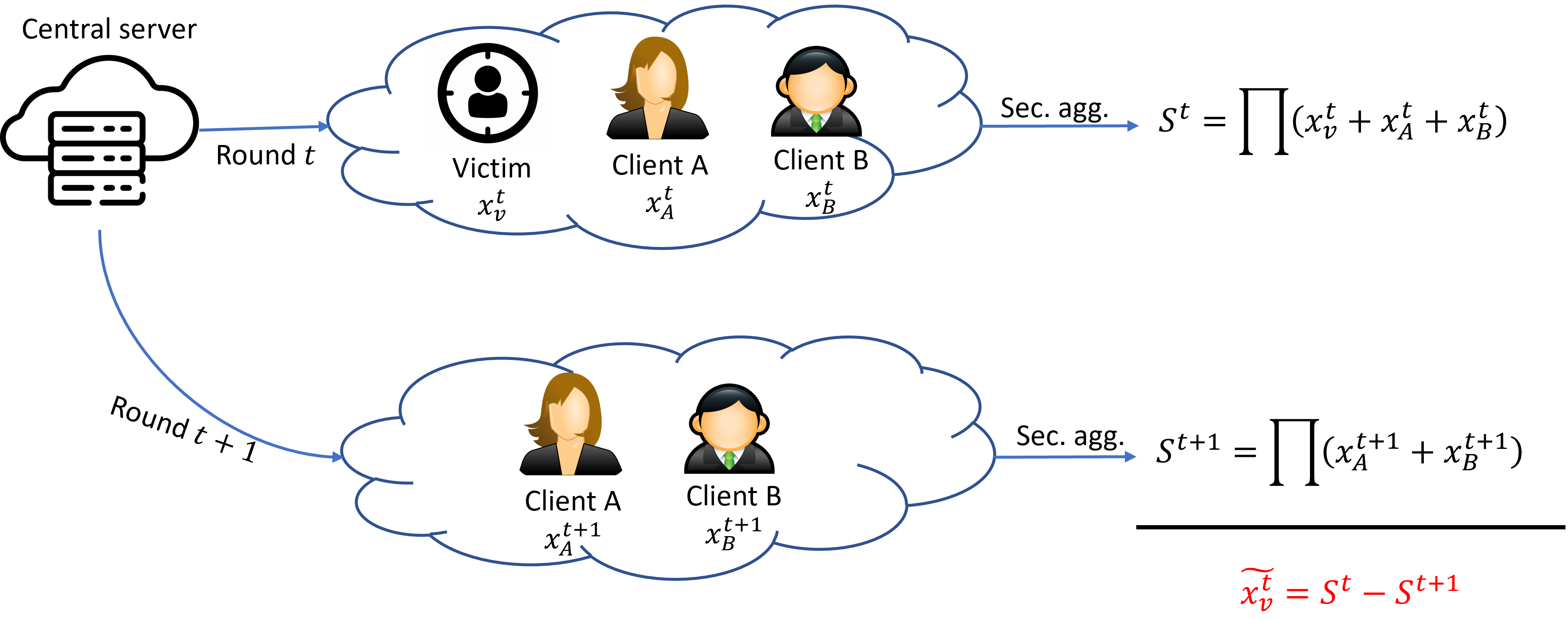}
    \caption{Overview of the non-colluding attack. The attack works in two rounds where the victim is selected in the first round, but not in the second round. The sum of the local models' parameters of each round is obtained via secure aggregation (Sec. agg.). The difference between the sums of two rounds give an approximation $\Tilde{x_v^t}$ of the victim's model.}
    \label{fig:non-collude}
    % \vspace{-0.2in}
\end{figure}

\noindent\textbf{Non-colluding attack.} In this strategy, even if we restrict the threat model to forbid collusion between the server and the clients, the server can still learn some information about the victim's model only by manipulating the client selection process. This attack requires at least two training rounds as illustrated in Fig. \ref{fig:non-collude}. The attack procedure is shown below:

\begin{enumerate}
    \item At round $t$, the server selects the victim $v$ and a subset of other clients $\calU' \subseteq \calU \setminus v$ to conduct the training round with $G^t$.
    \item Through secure aggregation, the server obtains $S^t = \prod_{\{x_u^t | u\in \calU' \cup v\}}\left[\sum_{u\in \calU' \cup v} x_u^{t}\right]$ which is the sum of the clients' models including the victim's.
    \item At round $t+1$, the server re-selects the subset $\calU'$ (without selecting $v$) and conducts the training round with $G^{t+1}$.
    \item The server obtains $S^{t+1} = \prod_{\{x_u^{t+1} | u\in \calU'\}}\left[\sum_{u\in \calU'} x_u^{t+1}\right]$ which is the sum of the clients' models excluding the victim's.
    \item The server then extracts an \emph{approximation} of the victim's model $x_v^t$ by $\Tilde{x_v^t} = S^{t+1} - S^t$
\end{enumerate}

The intuition behind this strategy is that, suppose $G^{t+1} = G^t$, for each client $u\in \calU'$, the local model parameters in the two rounds $x_u^t$ and $x_u^{t+1}$ are trained with the same initialization, same algorithm and on the same training data $D_u$. Therefore, we can expect that $x_u^{t+1} \approx x_u^t$ for $u\in \calU'$. As such, $S^{t+1} - S^t$ should give a good approximation of the victim's local model.

Regarding the assumption that $G^{t+1} = G^t$, the server can simply reuse $G^t$ at round $t+1$. Moreover, even if the server chooses not to reuse $G^t$ to avoid being suspicious, it can still honestly update $G^{t+1}$ according to the protocol and send $G^{t+1}$ to the clients at round $t+1$. However, this should be done only when the global model has already converged, thus $G^{t+1} \approx G^t$, so that the attack is still effective.

\noindent\textbf{Membership inference attack on $\Tilde{x_v^t}$.} With the non-colluding attack, we show how the server is able to obtain an approximation of the victim's model $\Tilde{x_v^t}$, the question remains whether the server can conduct any kind of privacy attacks on $\Tilde{x_v^t}$, in other words, does $\Tilde{x_v^t}$ leak any information about $D_v$? We investigate this issue by conducting some experiments with membership inference attacks \cite{salem2018ml,shokri2017membership} on $\Tilde{x_v^t}$. 

In our experiments, we use the CIFAR-10 dataset \cite{krizhevsky2009learning} and partition it among 50 clients, where each one holds about 1000 training samples. For the classification task, we use a convolutional neural network composed of two convolutional layers and two pooling layers, together with one fully connected layer, and a Softmax layer at the end. ReLU is used as the activation function. In each training round, the server randomly selects 6 clients to locally train their models using an Adam optimizer with 5 epochs and a batch size of 32. The global model converges after 100 training rounds. Then, at $t=101$, the server conducts the non-colluding attack in two rounds to obtain the approximation $\Tilde{x_v^t}$ of the victim's model.

Next, we conduct the membership inference attack from Shokri et al. \cite{shokri2017membership} on $\Tilde{x_v^t}$. This attack builds shadow models that mimic the behaviour of the victim's model (i.e., same model architecture, training data comes from the same distribution), and then uses the posteriors of those shadow models to train an attack model that determines whether a data sample is a member of the training dataset or not. In our experiments, we train the attack model as a Support Vector Machine (SVM) classifier. The attack success rate is determined by the portion of data samples that the attack model correctly predicts their membership, and it should be greater than 0.5, which is the baseline for random guessing.

\begin{figure}
    \centering
    \includegraphics[width=0.55\linewidth]{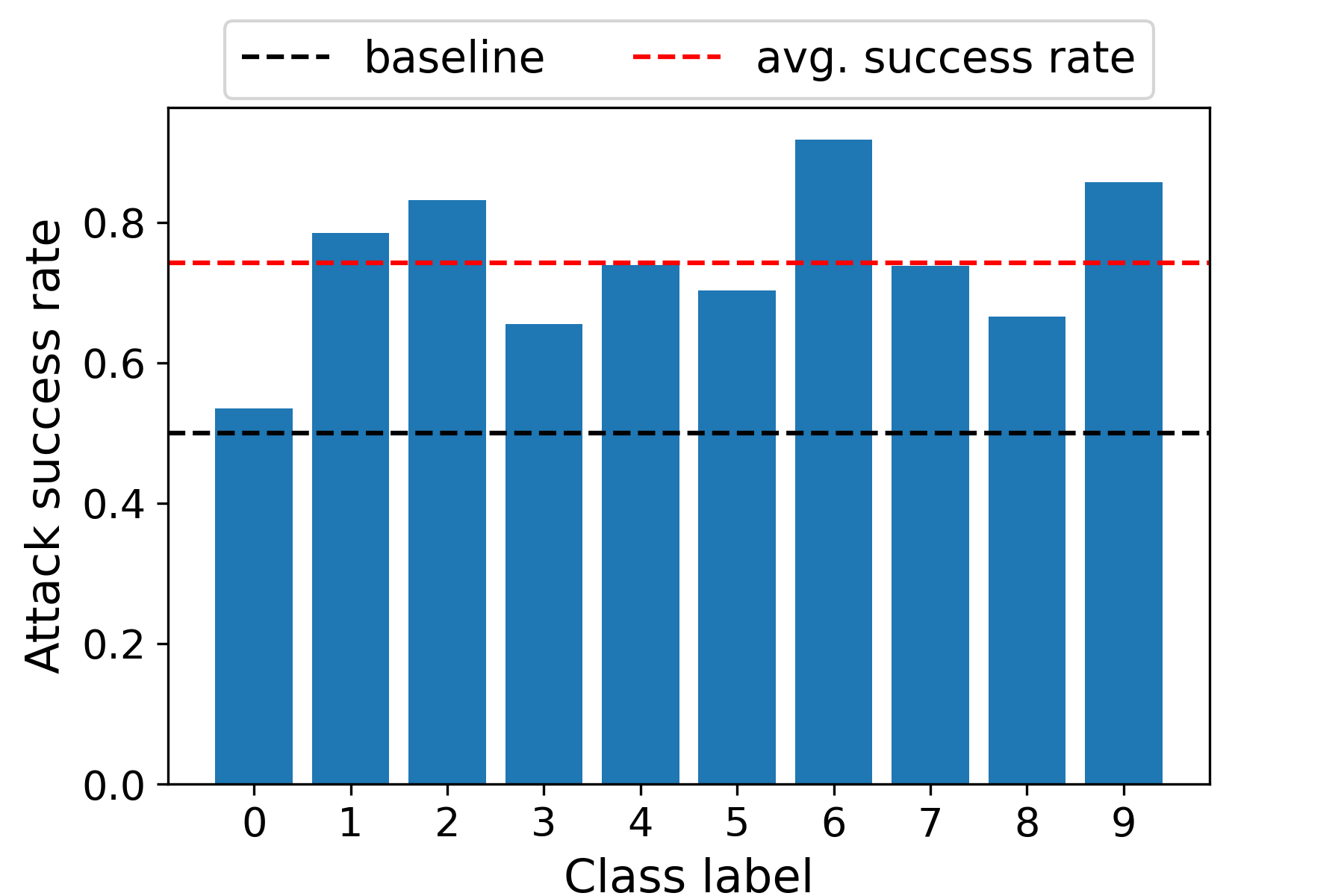}
    \caption{The success rate of the membership inference attack \cite{shokri2017membership} on $\Tilde{x_v^t}$ per class label. The black dashed line shows the baseline success rate of random guessing, which is 0.5. The red dashed line shows the average attack success rate across all class labels, which is 0.743.}
    \label{fig:membership}
    % \vspace{-0.19in}
\end{figure}

Figure \ref{fig:membership} shows the attack success rate of the membership inference attack per class label. We can see that the average success rate is about 0.743 which is much higher than the baseline. In particular, the attack can attain 0.92 success rate on the class label 6. From this result, we can conclude that even without colluding with clients, the server can still learn some information about the victim's training dataset $D_v$ only by manipulating the client selection process. %\mt{I'm done with III.A}

% \begin{figure}
%     \centering
%     \includegraphics[width=\linewidth]{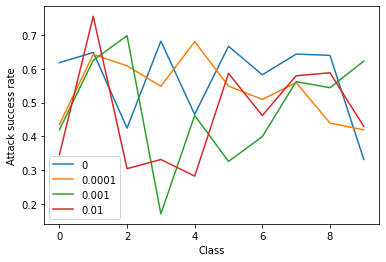}
%     \caption{Attack success on 10 classes of CIFAR10.}
%     \label{fig:my_label}
% \end{figure}

% \begin{figure}
%     \centering
%     \includegraphics[width=\linewidth]{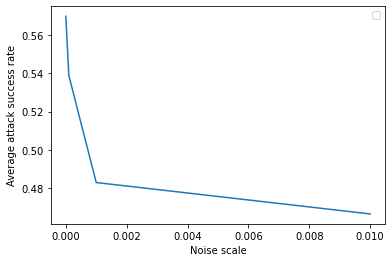}
%     \caption{Caption}
%     \label{fig:my_label}
% \end{figure}

% \begin{figure}
%     \centering
%     \includegraphics[width=\linewidth]{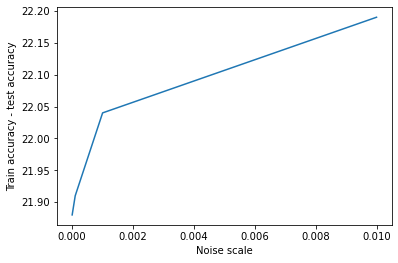}
%     \caption{Caption}
%     \label{fig:my_label}
% \end{figure}

% \begin{figure}
%     \centering
%     \includegraphics[width=\linewidth]{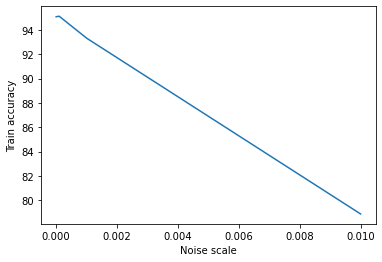}
%     \caption{Caption}
%     \label{fig:my_label}
% \end{figure}

%\begin{figure}
%    \centering
%    \includegraphics[width=\linewidth]{figures/attack1.png}
%    \caption{Attack success on 10 classes of CIFAR10.}
%    \label{fig:my_label}
%\end{figure}
%
%\begin{figure}
%    \centering
%    \includegraphics[width=\linewidth]{figures/attack2.png}
%    \caption{Caption}
%    \label{fig:my_label}
%\end{figure}
%
%\begin{figure}
%    \centering
%    \includegraphics[width=\linewidth]{figures/attack3.png}
%    \caption{Caption}
%    \label{fig:my_label}
%\end{figure}
%
%\begin{figure}
%    \centering
%    \includegraphics[width=\linewidth]{figures/attack4.png}
%    \caption{Caption}
%    \label{fig:my_label}
%\end{figure}

% !TEX root = main.tex

%\vspace{-0.08in}
\subsection{Secure client selection problem}
We define a new problem, called \emph{secure client selection (SCC)} problem that asks for a protocol $\Pi$, executed by a server $\calS$ and a set of clients $\calU$, to select subset of clients  in each training round in FL. At the end of the execution of the protocol $\Pi$, for each client $j$, the server $\calS$ sends  a collections of proofs $\{\omega_j^{(i)} \}_{i\in \calU}$, in which $\omega_j^{(i)}$  is either empty or \emph{a proof on whether or not the client $i$ is selected}.
The designed protocol is required to have three  security properties, namely, \emph{pool consistency, pool quality, and anti-targeting}.

%We say $\Pi$ is secure if the following properties hold: 1) \emph{pool consistency}: the server cannot prove that a client is both selected and not selected to honest clients; 2) \emph{pool quality}: the fraction of honest selected clients is bigger than a given parameter; and 3) \emph{anti-targeting}: all honest clients have the same probability to be selected. 

%Formally, we define the secure client selection problem as follows.
%In each training round, the server obtains a local state $\state_\calS$, and each client $i$ obtain a local state $\state_i$. 
%\pnote{The server send a proof $\omega_j^{(i)}$ to a client $j$.}
\begin{definition}[Secure client selection problem]
	\label{def:sec}
	 Let $\state_j$ be the local state of the client $j \in \calU$  at the end of a training round.	We say  $\Pi$ is secure if{f} there exists a predicate $\PoolVerify_\Pi$ that takes the state $\state_j$ of a client $j$,
% 	(the verifier),
% 	a client $i$,
	a proof $\omega_j^{(i)}$ (provided by server $\calS$)  as input and output 
%	 the state $\state_j$, a client $i$, an proof $\omega_i$ as input and  i.e., 
	\begin{equation*}
	\PoolVerify_\Pi(\state_j,\omega_j^{(i)}) = 
	\begin{cases}
	1 &\text{ if } i \text{ is selected in the view of } j ,\\
	0 &\text{ if } i \text{ is not selected in the view of } j,\\
	\perp &\text{ if } \omega_j^{(i)} \text{ is empty or invalid.}
	\end{cases}
	\end{equation*}
	
%	\mt{I'm confused with $\omega_j^{(i)}$}
%	 a subset of clients $\calP \subset \calU$ is selected to participate in the training. 
%	After the clients are selected, for any client $i \in \calU$, the server can provide a prove $\omega_i$ to verify if a client $i$ is selected, i.e., 
%	\begin{equation*}
%		\PoolVerify_\Pi(\state_j,i,\omega_i) = 
%		\begin{cases}
%			1 &\text{ if } i \text{ is selected} \\
%			0 &\text{ if } i \text{ is not select} \\
%			\perp &\text{ if } \omega_i \text{ is invalid}
%		\end{cases}
%	\end{equation*}
% 	After the clients are selected, the server provide a valid proof $\omega_i$ to each client $i$, i.e., $\PoolVerify_\Pi(\state_i,\omega_i) \ne \perp$.
%	we can verify if a client $u$ is selected using a proof $\omega_u$
	with the following properties: 
	\begin{itemize}
		\item \emph{Pool consistency}: $\forall i \in \calU$ and  $\forall j_1, j_2 \in \calH$,
		$$\Pr\left[		
		\begin{array}{l|l}
	\multirow{2}{*}{$\exists \ \omega_{j_1}^{(i)},\omega_{j_2}^{(i)}$} &
	 \PoolVerify_\Pi(\state_{j},\omega_{j_1}^{(i)}) = 1\land\\
		& \PoolVerify_\Pi(\state_{j'},\omega_{j_2}^{(i)}) = 0
		\end{array}
		\right] \le e^{-\Omega(\kappa)},$$
%		Let $\Verifiable_v(u \in \calP)$ be the event where the server $\calS$ can  prove to the client $v$ that $u \in \calP$ and $\Verifiable_v(u \notin \calP)$ be the event where the server $\calS$ can  prove to the client $v$ that $u \notin \calP$.
%		For any client $u$ and any honest clients $v_1,v_2$, we have
%		$$ \Pr[\PoolVerify_\Pi(u,\omega_u) = 1\land \PoolVerify_\Pi(u,\omega'_u) = 0] \le \negl(\kappa),$$
		where $\calH$ denotes the set of honest clients and $\kappa$ is the security parameter.
%		\mt{what are domain and range of $\kappa$. Look like it's just an integer.. How to set this, did we evaluate the performance based on $\kappa$?}
%		 and $\negl(.)$ is a negligible function. 
%		In each training round, all honest clients have the same view on the pool. 
		\item \emph{$\gamma$-pool quality for $\gamma \in (0, 1)$}:
%		 (parametrized by $\gamma$):
		Let $\calP$ be the set of selected clients, defined as:
		$$ \calP = \{i \in U: \exists  j \in \calH \text{ s.t. } \PoolVerify_\Pi(\state_j,\omega_j^{(i)}) = 1\}.$$
%		 The fraction of honest clients in the pool is at least $\gamma$. 
		 We have:
		$$\Pr\left[\frac{\calH \cap \calP}{\calP}  \ge \gamma\right] \ge 1-  e^{-\Omega(\kappa)}.$$

		\item \emph{Anti-targeting}:
		Let  $c= \frac{m}{n}$, termed the \emph{selection probability}.  We have:
		$$ |\Pr[i \in \calP] - c| \le e^{-\Omega(\kappa)},\ \forall i \in \calU.$$
% 		(parametrized by $\sprob$): For any honest client $i \in \calU$, we have,
% 		$$\Pr[ \exists \text{ honest } j,\omega_j^{(i)} \text{ s.t. } \PoolVerify_\Pi(\state_j,\omega_j^{(i)}) = 1] = \sprob.$$
	\end{itemize}
\end{definition}

%\mt{pool quality depends on $\gamma$. What is this $\gamma$, how to define etc...?}
The pool consistency ensures that the server cannot prove that a client is  selected to one client while proving that it is not selected to another  client. The pool quality enforces a minimum fraction of selected honest  clients. Finally, the anti-targeting guarantees that all honest clients are selected with the almost the same probability. 

%Consider a set of $n$ clients. I a pool selection protocol, 
%We require the following properties from a secure pool selection protocol:
%\begin{itemize}
%	\item \emph{Pool consistency}: In each training round, all honest clients have the same view on the pool.  
%	\item \emph{Pool quality}: The fraction of honest clients in the pool is at least $\rho$. 
%	\item \emph{Anti-targeting}: The probability of selecting any honest client into the pool is the same and equals $\sprob$, we $\sprob \in (0,1)$ is the fraction of clients the server want to select in each round. 
%\end{itemize}
%
%Here, the pool consistency and pool quality properties are required to guarantee the security of the aggregation. The anti-targeting property ensure the membership attacks. 

\noindent \textbf{Baseline protocol.} Initially, the clients register their public keys on the blockchain. 
%After the registration, the server and the clients can obtain a randomness by taking the hash value of the concatenation of all public keys. 
In each training round, each client computes a set of selected clients using a pre-arranged function of the registered information and the round number. The function can be implemented using blockchain's smart contracts so that all the clients agree on the same list of selected clients.
%each client can locally compute the 

The above protocol provides pool consistency. However, it could not guarantee either pool quality or anti-targeting properties.
Jumping a head, in Section \ref{sec:analysis}, we will show that our protocol can achieve all three security properties of SCC problem (see Table \ref{tab:summary}).
% as shown in the following paragraph.
\begin{table}[htp!]
	\centering
	%	\scriptsize
	\begin{tabular}{| c| ccc|}
		\hline
		%		\multirow{ 2}{*}{Protocols}&\multicolumn{3}{|c|}{Security\\
		%		\cline{2-6}
		Protocols & PC & PQ &AT\\
		\hline
		Baseline & yes & no & no  \\ 
		This work & yes&yes&yes \\
		\hline
	\end{tabular}
	\caption{The security of the baseline protocol and our protocol. PC, PQ, and AT stand for pool consistency, pool quality, and anti-targeting, respectively. 
	}
	\label{tab:summary}
\end{table}

\emph{Grinding attack on the baseline protocol.} 
%As we mentioned above, when the randomness is obtained, we can predict which clients are selected at any round. 
The clients, who colluded with the server, can wait til all other clients complete their registration. Then they can probe for different public keys to bias the selection of clients. Since the round number and the registration information of honest clients are known, the search can be done to either give more chance for colluding clients to be selected (breaking the pool quality) or more chance towards a targeted honest client (breaking the anti-targeting).

%We start with a simple baseline protocol. Initially, clients register their public keys on the blockchain. Then, in each training round, the server uses a current round number to select a subset of clients on the blockchain. 
%In the baseline protocol, the server can predict which clients are selected at any round. Further, since the information of all selected clients are stored on the blockchain, it requires a large amount of storage.   

\noindent \textbf{Completeness of the security properties.} We show that if a protocol satisfies the above three security properties in Definition \ref{def:sec}, the adversary cannot perform biased selection attacks as discussed in the previous subsection.

\begin{lemma}
	Consider a pool selection protocol that can achieve pool consistency, pool quality, and anti-targeting properties. The probability that the server can perform colluding/non-colluding attacks is at most $e^{-\Omega( \min\{h, \kappa\})},$ where $h = |\calH|$ and $\kappa$ is the security parameter. 
% 	\begin{itemize}
% 		\item The probability that the server can perform colluding attack is at most $e^{-\Omega(h)} + e^{-\Omega(\kappa)}$
% % 		$c (1-c)^{h-1} + e^{-\Omega(\kappa)}  = e^{-\Omega(h - \log(h))} + e^{-\Omega(\kappa)} $.
% 		\item The probability that an server can perform a non-colluding attack is at most 
% % 		$h \cdot c^{s-1} (1-c)^{h-s-1} + e^{-\Omega(\kappa)} =
% 		$e^{-\Omega(h)} + e^{-\Omega(\kappa)}$
% 	\end{itemize}
\end{lemma}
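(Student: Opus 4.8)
The plan is to bound separately the success probability of the colluding attack and of the non-colluding attack from Section~\ref{sec:attack}, and then combine the two estimates: the $\kappa$ term will come from the colluding case and the $h$ term from the non-colluding case. Throughout I will invoke pool consistency to justify speaking of a single well-defined selected pool $\calP$ in each round: except with probability $e^{-\Omega(\kappa)}$ (a union bound over the polynomially many clients and honest pairs) all honest clients agree on $\calP$, so the server gains nothing by handing out inconsistent selection proofs, and the pool composition used below is the one seen by every honest client.

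\emph{Colluding attack.} The server recovers the victim's update only through the identity $x_v^t = S - \sum_{u\in\calU'} x_u^t$, where $\calU'$ is the set of colluding clients in the pool and $S$ is the secure-aggregation output; this succeeds exactly when $v$ is the \emph{unique} honest client in $\calP$. If instead $|\calH\cap\calP|\ge 2$, then $S$ is a sum of at least two summands unknown to the server, and by the privacy guarantee of the aggregation protocol the server learns nothing about $x_v^t$ individually. So it suffices to upper bound $\Pr[\,|\calH\cap\calP|\le 1\,]$. By $\gamma$-pool quality, $|\calH\cap\calP|\ge\gamma|\calP|$ holds except with probability $e^{-\Omega(\kappa)}$; since $\gamma$ is a constant bounded away from $0$ and any protocol satisfying the hypothesis must sample a pool large enough to make pool quality non-trivial (so that $\gamma|\calP|\ge 2$), this yields $|\calH\cap\calP|\ge 2$ with probability $1-e^{-\Omega(\kappa)}$. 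Hence the colluding attack succeeds with probability at most $e^{-\Omega(\kappa)}$.

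\emph{Non-colluding attack.} Here the server outputs $\tilde{x}_v^{t} = S^{t+1}-S^t$, which is a meaningful approximation of $x_v^t$ only if the honest part of the round-$(t{+}1)$ pool equals $\calP_t\setminus\{v\}$ (in this attack the server colludes with no one); if some honest client participates in only one of the two rounds, its local model is absorbed into $S^{t+1}-S^t$ and no information about $x_v^t$ survives. In particular the attack requires that \emph{no} honest client outside $\calP_t$ be selected in round $t+1$, and there are at least $h-|\calP_t| = \Omega(h)$ such clients (the selection probability $c$ is bounded below $1$ --- indeed below the honest fraction --- so only a vanishing share of honest clients lies in $\calP_t$). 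Round $t+1$ uses fresh blockchain randomness, independent of round $t$; by anti-targeting each of those $\Omega(h)$ clients lands in $\calP_{t+1}$ with probability $c\pm e^{-\Omega(\kappa)}$, and treating these events as (conditionally) independent, the probability that none of them is selected is at most $\bigl(1-c+e^{-\Omega(\kappa)}\bigr)^{\Omega(h)} = e^{-\Omega(h)}$. Thus the non-colluding attack succeeds with probability at most $e^{-\Omega(h)}$, which stays negligible after a union bound over the $\mathrm{poly}(\kappa)$ training rounds.

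Combining, a semi-malicious server running either attack succeeds with probability at most $e^{-\Omega(\kappa)} + e^{-\Omega(h)} = e^{-\Omega(\min\{h,\kappa\})}$, as claimed. The hard part is the independence step in the non-colluding case: anti-targeting as stated in Definition~\ref{def:sec} only controls each client's \emph{marginal} selection probability, not the joint law of $\calP$, so to make $(1-c)^{\Omega(h)}$ rigorous I would either (i) appeal to the concrete VRF-based protocol, in which, conditioned on the round's randomness $\rand$, each client's membership in $\calP$ is an independent event, or (ii) strengthen anti-targeting to the statement that the pool distribution is within $e^{-\Omega(\kappa)}$ statistical distance of the corresponding product-of-$\mathrm{Bernoulli}(c)$ (or uniform $m$-subset) distribution. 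A secondary gap to close is the qualitative claim that, when the two pools differ, the surplus honest terms in $S^{t+1}-S^t$ really do swamp any signal about $x_v^t$, so that a downstream inference attack --- such as the membership-inference attack used earlier --- does no better than random guessing.
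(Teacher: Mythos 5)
Your proposal is correct at the same level of rigor as the paper's own (admittedly sketchy) outline, and it reaches the same final bound by summing per-attack failure probabilities, but the decomposition differs in one place. For the \emph{non-colluding} attack you and the paper argue essentially identically: the round-$(t{+}1)$ pool must reproduce the round-$t$ pool up to the victim, and since each honest client is (re-)selected with probability close to $c$, this coincidence event has probability roughly $c^{s-1}(1-c)^{h-s-1}=e^{-\Omega(h)}$ (you keep only the $(1-c)^{\Omega(h)}$ factor, which suffices for an upper bound). For the \emph{colluding} attack the routes diverge: the paper again uses anti-targeting concentration, bounding $\Pr[|\calH\cap\calP|=1]\approx{h \choose 1}c(1-c)^{h-1}=e^{-\Omega(h)}$, whereas you invoke $\gamma$-pool quality to force $|\calH\cap\calP|\ge\gamma|\calP|\ge 2$ except with probability $e^{-\Omega(\kappa)}$. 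Your route is attractive because it uses the pool-quality property (which the paper's outline never actually touches) and yields a $\kappa$-type bound independent of the pool's random composition; its cost is the extra, unproved assumption that $\gamma|\calP|\ge 2$, which does not follow from the three properties as literally stated (a pool of size two with one honest member can satisfy $\gamma\le 1/2$ pool quality while still enabling the attack), so you would need to add a lower bound on $|\calP|$, e.g.\ via concentration around $cn=m$. The two caveats you flag --- that anti-targeting constrains only marginal selection probabilities, not the joint law of $\calP$, and that a differing pool only heuristically destroys the signal in $S^{t+1}-S^t$ --- are genuine, and the paper's own binomial computations silently assume exactly the independence you identify; being explicit about needing either the concrete VRF construction or a strengthened anti-targeting definition is an improvement over the published sketch rather than a defect of yours.
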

Due to the space limit, we provide an outline of our proof. By pool consistency, all nodes have the same view on $\calP$, the set of selected clients  with a probability at least $1-e^{-\Omega(\kappa)}$.

From the anti-targeting property, the probability that an honest client is selected concentrate around $c$, the selection probability.
%Let $c$ be the probability that an honest client is selected (see the anti-targeting property). 
We have
% The probability that exact one honest client is selected is
% 	$$h\cdot c (1-c)^{h-1} = e^{-\Omega(h)}$$ 
\begin{align*}
    \Pr[\text{Colluding attack}] &\leq \Pr[\calH \cap \calP =1] \\
    &\approx {h \choose 1} c (1-c)^{h-1} = e^{-\Omega(h)}.
\end{align*}
% 	Based on the pool quality property, the fraction of honest selected clients is smaller than $\gamma$ with probability at most $\negl(\kappa)$. Thus, the server can perform a colluding attack with probability at most $\negl(\kappa)$.
% 	With the pool consistency, the probability that the server can perform a colluding attack is
For non-colluding attacks, let $S_i, S_{i+1}$ be the sets of selected  clients in two consecutive rounds and $s = |S_i|$.   \begin{align*}
    \Pr[\text{Non-colluding attack}] &\leq \Pr[S_{i+1} = S_i \cup \{x\}, x\in \calH]\\ &\approx {s \choose 1}  c^{s-1} (1-c)^{h-s-1}  = e^{-\Omega(h)}.
\end{align*}
Combining all the probabilities of the bad events yields the bound $e^{-\Omega( \min\{h, \kappa\})}$.
%\vspace{-0.1in}
\section{Client selection protocol}\label{sec:prot}

%\subsection{Overview}
%\subsection{Based line protocol}

%\pnote{Add a baseline protocol here.}

%\subsection{Our secure pool selection protocol}

\begin{figure}[htp!]
	\centering
	\includegraphics[width=\linewidth]{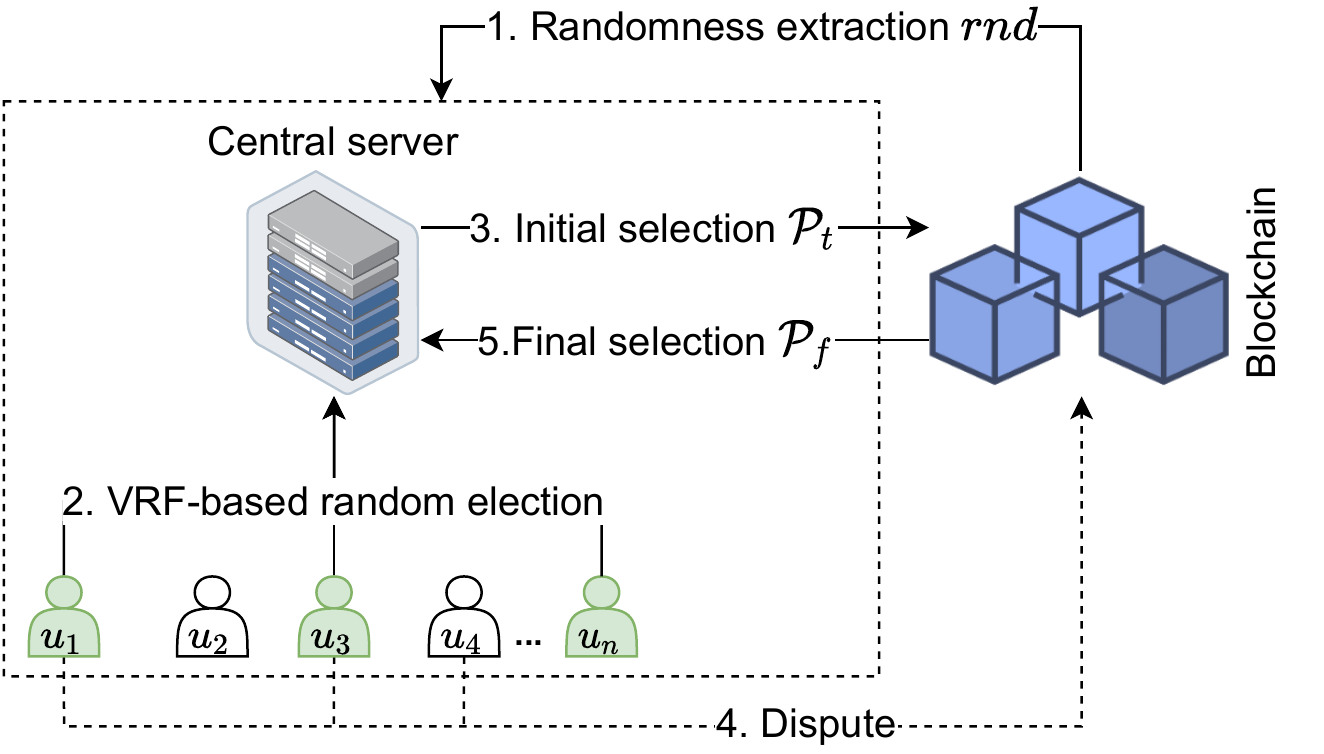}
	\caption{The $5$ steps of client selection in each training round: (1) Randomness extraction, (2) Random election, (3) Initial commitment, (4) Dispute, and (5) Final selection.}\label{fig:pool-selection}
% 	\vspace{-0.15in}
\end{figure}

Our protocol consists of a one-time registration phase and multiple training rounds. In the registration phase, 
 clients registered their public keys with the server.
At the beginning of each training round,  a random subset of  clients will be selected using our client selection protocol.

\textsf{Registration phase}. At the beginning of the FL process, each client registers its public key with the  server. The server composes a list of all public keys, submits a succinct commitment of the list to the blockchain, and provides each client with a membership proof using  Merkle proof \cite{dahlberg2016efficient}, a lightweight way to prove the membership in large sets.
% Upon gathering all public keys from the clients, 
% the server constructs a Merkle tree $\Merkle_r$ from those public keys and submits a registration transactions that consist of the Merkle tree root $\Merkleroot_r$. 
% server gathers the public keys of the clients, constructs a Merkle tree, and submits a registration transactions that consist of the Merkle tree root. Later, in each training round, a subset of clients will be selected from the set of registered clients. 
\\
\emph{Membership proof with sorted Merkle tree}. Given a set of $l$ values $X=\{d_1, d_2, \ldots, d_l\}$, a Merkle tree is a binary tree constructed over the hash values of $d_i$. The root of the Merkle tree, denoted by $\Merkleroot(X)$, can be used as a succinct representation of all the values. Knowing $\Merkleroot(X)$,  we can construct a membership/non-membership proof of size $O(\log l)$  to prove whether a value $x$ appears in $X$. Such a proof, denoted by $\MerklePath(x \cin X)$, can also be verified in a time $O(\log l)$.

Let $\calU$ be the set of registered clients, the server submits a registration transaction containing the Merkle root $\Merkleroot(\calU)$ to blockchain and sends $\MerklePath(\pk_i \cin \calU)$ to each client $i$.

\textsf{Client selection phase.} 
As shown in  Fig.~\ref{fig:pool-selection}, the selection consists of: randomness extraction, VRF-based random election, initial selection, dispute, and final selection. In randomness extraction, all clients and the server compute locally a random token $\rand$ by hashing together the block headers in the previous round. Verifiable random functions (VRFs) \cite{dodis2005verifiable}, taking the client's public key and $\rand$ as inputs, are employed to determine which clients are selected. In the initial selection, the server composes a list of selected clients and commits the list to the blockchain. A selected client can submit a dispute transaction if s/he is not properly included in the initial list, forcing the server to include her/him in the final selection. 
%\mt{how does she know to dispute, sit there and verify the initial list each round? What is the property of dispute?}

% After that, the server can provide a proof of size $O(\log k)$ to show whether or not a public key of a client is included in the Merkle tree.

% In our protocol, each training round  lasts for $\rlen = \Omega(\kappa)$ blocks on the blockchain. 
% We denote $t_r$ as the starting block of the round $r$ (we have $t_r = t_{r-1} + \rlen$). 

%Initially, clients register their public keys on the blockchain.
%Then, in each training round, the clients use the registered key to compute the VRFs on a public randomness. If the outputs of the VRFs are smaller than a threshold, the clients are selected write their public keys on the blockchain.   

%Note that, if the number of clients is big, writing their public keys on the blockchain is expensive. 
%To reduce the blockchain chain cost, we use Merkle tree to compress the public keys to a small commitment (i.e., the Merkle tree root) and only write the commitment on the blockchain. 

%We remark that, in our protocol, the server needs to maintain a full blockchain node, while each client only needs to maintain a lightweight SPV node.

We provide the details for the  steps in the client selection protocol in Alg.~\ref{alg:protocol}.
We use the height of the blockchain, or \emph{block height} (BH), to measure time. 
We select a parameter 
$\tau = \Omega(\kappa)$ so that  sent messages are received and submitted transactions are finalized within $\tau$ blocks. For a training round started at BH $\ell$, the client selection protocol is executed  between block heights $\ell $ and $\ell + 2 \tau$. 

\begin{algorithm}
	\small 
	\SetKwInOut{States}{Server's state}
	\SetKwInOut{Statec}{Client $i$'s state}
	\SetKwInOut{State}{One trainning round}
	\SetKwInOut{Goal}{Goal}
	\SetKwInOut{Input}{Input}
	\SetKwInOut{Output}{Output}	
	\SetKwFor{Event}{}{}{}	
	\State{Consider a training round that starts at block height (BH) $\ell$.}
	\underline{BH $\ell$}:  \emph{Randomness extraction} The server and clients extract the randomness $\rand$ from the blockchain.  \\
	\underline{BH $\ell$}: \emph{VRF-based random election}: Each client $i$ computes $(\voutput_i,\vproof_i) \gets \VRFprove_{\sk_i}(\rand)$. If $\voutput_i < c2^{\kappa}$, the client $i$ is qualified and sends the proof  $(\voutput_i,\vproof_i,\pk_i)$ to the server. \\
	%[BH $\ell + \tau$] 
	\underline{BH $\ell+\tau$}: \emph{Initial  selection}: The server submits the Merkle tree root on the set of qualified clients $\calP_t$ and sends the Merkle proof to each client.\\
% 	Let $\calP_t$ be the set of public keys of qualified clients that are verified by the server. The server submits an initial selection transaction that consists of the Merkle tree root $\Merkleroot(\calP_t)$,
% 	Upon receiving a valid proof  $(\voutput_i,\vproof_i,\pk_i)$, the server adds $\pk_i$ to $\calP$. Then, the server constructs a Merkle tree on $\calP$, submit a transaction the consists of the Merkle root, 
% and sends the Merkle proof to each clients.\\
	%[BH $\ell+2\tau$] 
	 \underline{BH $\ell+\tau$}: \emph{Dispute}: 
%	After $k = \omega(\kappa)$ blocks,
	 If a qualified client $i$ does not receive the proof from the server,
%	is not included in $\calP_t$, 
	it submits a dispute transaction that consists of the proof $(\voutput_i,\vproof_i,\pk_i)$ to blockchain.\\
	\underline{BH $\ell+2\tau$}: \emph{Final selection}: The server submits the Merkle tree root on the set of dispute clients $\calP_f$.
% 	The server submits a transaction to obtain the list of dispute clients.
	\caption{Client selection protocol.}
	\label{alg:protocol}
\end{algorithm} 
	\noindent \emph{1. Randomness extraction.} We follow the scheme to extract the randomness in \cite{david2018ouroboros}. 
% 	At the round $r$, 
	At block height $\ell$, the server and all clients compute a randomness $\rand$ by hashing together the block headers of $\kappa$ blocks created during the previous training round. The chain quality of the blockchain means that, with high probability, at least one of those blocks must be from an honest miner \cite{garay2015bitcoin}. Thus, $\rand$ includes at least one unbiased random source.
% 	More concretely, the randomness $\rand$ is computed as the hash value of the first $\kappa$ blocks in the previous training round, i.e.,  
%$$\rand \gets \hash(\bheader(\block_{t_{r-1}}),\bheader(\block_{t_{r-1}+1}),\cdots,\bheader(\block_{t_{r}-\kappa})).$$
% $$\rand = \hash(\block_{t_{r-1}},\block_{t_{r-1}+1},\cdots,\block_{t_{r-1}+\kappa}),$$
% where $t_{r-1}$ is the first block in the training round $r-1$. 
%Since the randomness is extracted only using the block headers, the clients with SPV nodes can also extract the randomness.

		\noindent \emph{2. VRF-based  random election.} After extracting the randomness, each client $i$ uses the VRF to check whether or not she/he is selected in this round. 
%		 A client $i$
%		with a secret key $\sk_i$ can call $\VRFprove_ {\sk_i}(\cdot)$ to generates a pseudorandom output $\voutput_i$ along with a proof $\vproof$. 
% 		In detail, 
		The client $i$  computes 
		the output $\voutput_i$ and the proof $\vproof_i$ of the VRF based on the randomness $\rand$, i.e., 
		$(\voutput_i,\vproof_i) \gets \VRFprove_{\sk_i}(\rand)$.
		If the VRF output $\voutput_i$ is smaller than a given threshold, i.e., $\voutput_i < c 2^{\kappa}$, the client $i$ is \emph{qualified} to be selected.
%		 If the output $\voutput_i$ is smaller than a threshold $c$, the client $i$ is selected.
		  Here,  $c=\frac{m}{n}$ is the \emph{selection probability}, i.e., the fraction of selected clients per round. 
%		  chosen based on the fraction of clients that we select in each round. 
		  If the client $i$ is qualified, she/he sends a message $(\voutput_i,\vproof_i,\pk_i)$ to the server. 
%		  \mt{what motivates $i$ to check?}
		  
%		  Other parties that have the proof  and the corresponding public key $\pk_i$ can check that the output has been generated by VRF, by calling $\VRFverify_{\pk_i}(\cdot, \voutput_i, \vproof_i)$. 
		
		\noindent \emph{3. Initial  selection}. Let $\calP_t$ be the set of public keys of qualified clients that are verified by the server. The server submits an initial selection transaction that consists of the Merkle tree root $\Merkleroot(\calP_t)$ to the blockchain.
% 		The server collects a list $\calP_t$ that consists of all public keys of qualified clients, computes a small commitment using the sorted Merkle tree, and submit an initial selection transaction to the blockchain. More precisely, upon receiving a proof $(\voutput_i,\vproof_i,\pk_i)$ from the client $i$, the servery verifies 1) $\VRFverify_{\pk_i}(\rand, \voutput_i, \vproof_i) = 1$ and 2) $\voutput_i < c 2^{\kappa}$ before adding $\pk_i$ to the list $\calP_t$. 
% 		Then, the server constructs a sorted Merkle tree $\Merkle_t$ based on the list  $\calP_t$
% 		sorts the list $\calP_t$ and computes a sorted Merkle tree $\Merkle_t$ where the leaves is indexed based on the public keys of the clients in $\calP_t$. 
%		Based on the list $\calP$, the server computes a sorted Merkle tree $\Merkle$ where the leaves is indexed based on the public keys of the clients in $\calU$. 
%		by assigning the values of the corresponding leaves of the clients in $\calP$ as $1$. 
%		as follows.
%		For each client $i \in \calP$, we add $(\pk_i,1)$ as the leaf of the Merkle tree. For each client $i \in \calU \setminus \calP$, we add $(\pk_i,0)$ as the leaf of the Merkle tree.
%		 where the leaves is indexed based on the public keys of the clients in $\calU$. 
%		each leaf 
%		in the Merkle tree is a public key in $S$. 
% 		and  submits a initial selection transaction that consists of the Merkle tree root $\Merkleroot_t$ to the blockchain. 
		After the transaction is included to the blockchain, the server provides a Merkle proof $\MerklePath(\pk_i \cin \calP_t)$  for each client $i \in \calU$.
% 		to show whether or not $\pk_i$ is included in the Merkle tree.
%		 a proof that consists of two parts: 
% 		 the Merkle tree path to prove whether or not the client $i$ is included in the commitment.
%		1) the initial  commitment transaction and the proof (to SPV node) that the transaction is included on the blockchain; 2) the Merkle tree path to prove whether or not the client $i$ is included in the commitment.
% 		If $i$ is included in the commitment, this proof is path $\MerklePath_t(\pk_i)$ from the root to $pk_i$ in $\Merkle$ to all client $i \in \calP_t$. 
% 		If $i$ is not included in the commitment this proof consists of two paths 
% 		$\MerklePath_t(\pk_{i_1})$, $\MerklePath_t(\pk_{i_2})$, where $\pk_{i_1} < \pk_i < \pk_{i_2}$ and $\pk_{i_1}$, $\pk_{i_2}$ are two consecutive leaves in the Merkle tree.
%		of whether or not the client is included in the commitment.
%		The proof consists of two parts: 1) the initial  commitment transaction and the proof that the transaction is included on the blockchain, 2) the path $\MerklePath(\pk_i)$ from the root to $pk_i$ in $\Merkle$ to all client $i \in \calP$
%		that consists of the transaction is included to the blockchain
%		the server consists of a proof that the transaction is included to the blockchain and the path $\MerklePath(\pk_i)$ from the root to $pk_i$ in $\Merkle$ to all client $i \in \calP$. 
		
		\noindent \emph{4. Dispute.} If a qualified client $i \in \calU$ does not receive any Merkle proof from the server, or finds any discrepancy between the Merkle root obtained from the server to the one that the server submitted to the blockchain, it will start a dispute process. The client will submit proof of qualification directly to blockchain to force the inclusion of itself into the pool.
		More concretely,  at block height $\ell+ \tau$, 
%		after sending the proof to the server for $k = \Omega(\kappa)$ blocks,
% 		, the qualified client $i$ waits for $\kappa$ blocks on the blockchain to ensure the that initial selection transaction is included in the blockchain.
% 		During this time, upon receiving a proof from the server, the client verifies that 1) the initial commitment transaction is included on the blockchain (using the header chain), and 2) the Merkle tree path is correct.
%		 Note that, after $\kappa$ blocks on the blockchain, we can guarantee that the any valid transaction from the server is included on the blockchain. 
% After $\kappa$ blocks, 
the client can submit a transaction containing the tuple
% the VRF output and the public key 
$(\voutput_i,\vproof_i,\pk_i)$ to the blockchain. 
The client $i$ also includes the Merkle proof $\MerklePath(\pk_i \cin \calU)$ to show that its public key is  registered. 
%\mt{I don't see the time window for $i$ to dispute (in this discussion nor in the Alg 1 pseudo). And I don't see how $i$ will do this? What motivate $i$ to sit there, to verify, to dispute? Dispute can be done for something in the past (like dispute your credit card transactions. But this is live disupte... Unless there is a clear motivation, why client bothers to do this step?}
% Note that, only the registered clients can submit the transactions. Thus, the clients $i$ also needs to include a Merkle proof to show that $\pk_i$ is included in the registration transaction. 

%If the server cannot provide the proof to show that the public key of the client $i$ is committed on the blockchain, the client $i$ can submit a dispute transaction that consists of $(\voutput_i,\vproof_i,\pk_i)$ to the blockchain. If the transaction is included to the blockchain before the client selention phase ends, the client $i$ is also considered to be selected in the current round.
	
\noindent \emph{5. Final selection.} 
At block height $\ell+ 2\tau$, the server submits a final selection transaction that contains the information of \emph{all dispute transactions}.
Let $\calP_t$ be the set of the public keys of \emph{dispute clients}, i.e., the clients who submitted dispute transactions.
% \tn{should we modify this notation?}
% Finally, the server submits a final selection transaction to include all of the \emph{dispute clients}, i.e., the clients who submitted dispute transactions. 
% More precisely, the server collects a list $\calP_f$ that consists of all public keys of the dispute clients. (It is possible that a clients belong to both $\calP_t$ and $\calP_f$.)
Then, similar to the initial selection, the server constructs a Merkle tree $\Merkle_f$ based on $\calP_f$.
% the server sorts the list $\calP_f$ and computes a sorted Merkle tree $\Merkle_f$ where the leaves is indexed based on the public keys of the clients in $\calP_f$. 
%After the dispute resolution steps, the server collects a list $\calP_f$ that consists of all public keys of the clients who submitted dispute transactions. (Note that, it is possible that a clients belong to both $\calP$ and $\calP_f$.) Based on the list $\calP_f$, the server computes a sorted Merkle tree $\Merkle_f$ the leaves is indexed based on the public keys of the clients in $\calP_f$. 
%		 by assigning the values of the corresponding leaves of the clients in $\calP$ as $1$. 
%		where each leaf in the Merkle tree is a public key in $\calP_f$. 
% Next, 
The server submits a final selection transaction that consists of the Merkle tree root $\Merkleroot(\calP_f)$ and sends a Merkle proof $\MerklePath(\pk_i \cin \calP_f)$  to each client $i \in \calU$.
% root of the Merkle tree $\Merkleroot_f$ to the blockchain. 
Here, before adding the final selection transaction to the blockchain, the miners  verify that all public keys of the dispute clients are included in the  $\Merkleroot_f$. 
% Recall that, as the information of the dispute clients can be obtained from the dispute transactions, the miners can also compute the Merkle tree $\Merkle_f$. 
% We remark that, the server must submit the 
 The \emph{correctness will be enforced} through smart contracts, executed by all miners in the blockchain.
%\end{itemize}
%1. First, the clients and the server extract a randomness from the blockchain (by concatenating the hash values of $\kappa$ consecutive blocks in the previous round.) 

%At the beginning, the server commits a transaction to start the round. Now, based on the block $B$ that contains the transactions, everyone can extract a randomness from the blockchain (by concatenating the hash values of $\kappa$ consecutive blocks that end at $B$. 

%2. Each client takes the randomness as the input of VRF to check whether or not she/he is selected. If a client is selected, she/he will send the proof to the server.
%
%3. The server then commits the list of selected clients to the blockchain (using Merkle tree) and sends each selected client proof to show that she/he is selected.
%
%4. If a client is selected but ignored by the server, she/he can commit a transaction to the blockchain. After the transaction is committed, the client can participate in that round.
%\input{communication}

% \vspace{-0.1in}
\section{Security Analysis} \label{sec:analysis}

In this section, we analyze the security of our protocol in Algo. \ref{alg:protocol}. We start with the construction of our $\PoolVerify(\cdot)$ function, followed by the proof sketches on the  three security properties, defined in Section~\ref{sec:prelim}.
\emph{Pool membership verification function.} We describe the function $\PoolVerify(\state_j,\omega_j^{(i)})$ that verifies if the client $i$ is selected in the view of the client $j$. 
%\mt{inconsistent notation, earlier is $\omega_j^{(i)}$. The whole section has this problem.}

% \emph{Proof of inclusion/non-inclusion.} We describe the proof of inclusion/non-inclusion. 

 For each client $j$ with the state $\state_j$, 
%  can use the proof $\omega_i$ to verify if the client $i$ is selected. More precisely, 
%For each client $j$ with the state $\state_j$, 
the function $\PoolVerify(\state_j,\omega_j^{(i)})$ extracts the blockchain $\Chain_j$ from the local state $\state_j$ and then proceeds as follows.
\begin{itemize}
% 	\item . 
	\item The function verifies whether or not 
	(1) $\VRFverify(\pk_i, \voutput_i,\vproof_i) = 1$.
	(2) the initial selection transaction and the final commitment transaction are included in the header blockchain $\Chain_j$.
%	if 1) the format of $\omega_i$ is correct; 2) $\Verify(\voutput_i,\vproof_i,\pk_i) = 1$; 3) $\Merkleroot$ and $\Merkleroot_f$ are included in the transactions from the server.
	 If those conditions do not hold, it returns  $\perp$.
%	If the format of $\omega_i$ is not correct or $\Verify(\voutput_i,\vproof_i,\pk_i) = 0$, the function returns $\perp$.
	\item If all conditions hold, i.e., the proof $\omega_j^{(i)}$ is valid, 
%	the function verifies if $\Merkleroot$ and $\Merkleroot_f$ are stored on the blockchain.
%	Then, 
	the function verifies 
	(1) $\voutput_i < c2^\kappa$, and
	(2) $\pk_i$ is included in $\Merkleroot$ or in $\Merkleroot_f$.
	If those conditions hold, it returns $1$, i.e., the client $i$ is selected.
	
%	uses $\MerklePath(\pk_i)$ and $\MerklePath_f(\pk_i)$ to verify if $\pk_i \in \calP$ and $\pk_i \in \calP_f$.
%	If $\voutput_i < c2^\kappa$ and $\pk_i \in \calP \cup \calP_f$, the function returns $1$.
	\item Otherwise, the function returns $0$, i.e., the client $i$ is not selected. 
\end{itemize}

Recall that in our protocol, for each qualified client $j$, the server provides only   $\omega_j^{(j)}$, the  proof of membership of $j$. 
%For each client $i$, 
% After the finishing the client selection in each training round, the server can provide 
The proof  consists of 
(1) the VRF output and the public key of $j$ $(\voutput_j,\vproof_j,\pk_j)$,
(2) the initial selection transaction that consists of $\Merkleroot_t$,
%and the proof (to SPV node) that the transaction is included on the blockchain,
(3) the Merkle proof $\MerklePath_t(\pk_j)$,
% tree path to prove whether or not the client $i$ is included in the commitment $\Merkleroot_t$,
(4) the final selection transaction that consists of $\Merkleroot_f$,
% and the proof (to SPV node) that the transaction is included on the blockchain,
and (5)  the Merkle proof $\MerklePath_f(\pk_j)$.
% the Merkle tree path to prove whether or not the client $i$ is included in the commitment $\Merkleroot_f$,
% = \langle (\voutput_i,\vproof_i,\pk_i), \Merkleroot, \MerklePath(\pk_i), \Merkleroot_f,$ $\MerklePath_f(\pk_i) \rangle$ for the client $i$.

\emph{Pool from all clients' views.} We say a client $i$ is selected if there exists an honest client $j$ and a proof $\omega_j^{(i)}$ such that 	$\PoolVerify(\state_{j},\omega_j^{(i)}) = 1$. Let $\calP$ be the set of selected clients, i.e.,
$$ \calP = \{i: \exists \text{ honest client } j, \omega_j^{(i)}, s.t., \PoolVerify(\state_{j_1},\omega_j^{(i)}) = 1\}.$$
%First, we consider the client selection protocol in which the randomness is perfect. 

%We are now ready to prove the security of our pool selection protocol.
We first prove that all honest clients have the same view on the set $\calP$ of selected clients. Intuitive, as the blockchain maintains an immutable ledger, all honest clients have the same view on the commitment transactions. Thus, they can extract the same list of selected clients. 

\begin{lemma}[Pool consistency]
	%	Consider a client selection protocol  in Algorithm \ref{alg:protocol}. 
	For any client $i \in \calU$, and any honest clients $j_1, j_2$, we have,
	$$\Pr\left[		
	\begin{array}{l|l}
	\multirow{2}{*}{$\exists \ \omega_{j_1}^{(i)},\omega_{j_2}^{(i)}$} &
	\PoolVerify(\state_{j_1},\omega_{j_1}^{(i)}) = 1\land\\
	& \PoolVerify(\state_{j_2},\omega_{j_2}^{(i)}) = 0
	\end{array}
	\right] \le e^{-\Omega(\kappa)}$$
	\label{lemma:consistency}
\end{lemma}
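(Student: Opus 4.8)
\textbf{Proof plan for Lemma~\ref{lemma:consistency} (Pool consistency).}

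The plan is to reduce the event in the lemma to a failure of blockchain persistence, which by assumption occurs with probability at most $e^{-\Omega(\kappa)}$. First I would unpack what it means for $\PoolVerify(\state_{j_1},\omega_{j_1}^{(i)}) = 1$ while $\PoolVerify(\state_{j_2},\omega_{j_2}^{(i)}) = 0$. By the construction of $\PoolVerify$, the output $1$ requires that (a) $\VRFverify(\pk_i,\voutput_i,\vproof_i)=1$, (b) the initial and final selection transactions are included in $\Chain_{j_1}$, (c) $\voutput_i < c2^\kappa$, and (d) $\pk_i$ appears in $\Merkleroot_t$ or $\Merkleroot_f$ as witnessed by a valid Merkle proof. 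The output $0$ (as opposed to $\perp$) requires that conditions (a)--(b) hold in $j_2$'s view but that $\pk_i$ is \emph{not} included in $\Merkleroot$ or $\Merkleroot_f$ according to $j_2$'s chain $\Chain_{j_2}$.

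The key observation is that conditions (a) and (c) — the VRF verification and the threshold check — depend only on $\pk_i$, $\voutput_i$, $\vproof_i$, and the randomness $\rand$, none of which depend on which honest client is doing the verifying: the VRF is deterministic given its inputs, and $\rand$ is itself extracted from the blockchain, so by persistence all honest clients agree on $\rand$ except with probability $e^{-\Omega(\kappa)}$. Hence the only way the two verdicts can differ is that $j_1$ and $j_2$ disagree about the contents of the committed transactions — either about which initial/final selection transactions are on-chain, or about the Merkle roots $\Merkleroot_t$, $\Merkleroot_f$ contained in them. Since the Merkle proofs $\MerklePath_t(\pk_i)$, $\MerklePath_f(\pk_i)$ are verified against those on-chain roots, and a correct Merkle proof cannot be forged except with probability $e^{-\Omega(\kappa)}$ (collision resistance of the hash, parameterized by $\kappa$), a membership verdict that differs between $j_1$ and $j_2$ forces $\Chain_{j_1}$ and $\Chain_{j_2}$ to differ on a finalized block. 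I would then invoke persistence: the selection protocol only reads transactions that were submitted at least $\tau = \Omega(\kappa)$ blocks in the past (the initial selection at BH $\ell+\tau$ and the final selection at BH $\ell+2\tau$ are read after they are $\Omega(\kappa)$-deep), so both honest clients have them in their common stable prefix, and they cannot disagree about their contents except with probability $e^{-\Omega(\kappa)}$.

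Putting this together, I would take a union bound over the $O(1)$ bad events — randomness disagreement, hash collision in a Merkle proof, and a persistence violation on the relevant blocks — each of which is $e^{-\Omega(\kappa)}$, to conclude the stated bound. The main obstacle, and the part deserving the most care, is ruling out the case where $j_1$ accepts a valid-looking proof $\omega_{j_1}^{(i)}$ that was never honestly generated: here one must argue that even an adversarial server cannot produce, for a chain prefix shared with an honest client, a Merkle inclusion proof for $\pk_i$ against a root that does not actually contain $\pk_i$, which is exactly where collision resistance of the hash function enters; and symmetrically, that the server cannot have committed one root to $j_1$'s view and a different root to $j_2$'s view, which is where immutability/persistence of the on-chain commitment is used. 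Once those two cryptographic facts are in hand the remaining bookkeeping is routine.
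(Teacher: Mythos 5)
Your plan is correct and follows essentially the same route as the paper, which only sketches the argument: honest clients share the same view of the committed initial/final selection transactions by blockchain persistence, hence agree on $\Merkleroot_t$ and $\Merkleroot_f$, hence agree on whether $\pk_i$ is included. Your additional steps (determinism of VRF verification given a common $\rand$, and collision resistance of the hash to rule out conflicting membership/non-membership proofs against the same root) are details the paper omits but are exactly the right ones to fill in.
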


%\vspace{-0.08in}
%\tn{It does not seem to match the definition?}
%\begin{proof}[Proof idea]
We omit the proof due to the space limit and outline the main intuition.
	As all the honest clients have the same view on the blockchain, the valid proofs $\omega_{j_1}^{(i)},\omega_{j_2}^{(i)}$ must have the same Merkle tree roots $\Merkleroot_t$ and $\Merkleroot_f$. Recall that, the client $i$ is considered to be selected if it is included in $\Merkleroot_t$ and $\Merkleroot_f$.
	Thus, the honest clients have the same view on whether or not the client $i$ is selected.

Next, we prove that the fraction of honest selected clients is proportional to the fraction of honest clients. Intuitively, the VRFs guarantee the randomness in selecting the qualified clients, i.e., the fraction of honest qualified clients is proportional to the fraction of honest clients. Plus, the dispute ensures that all honest qualified clients are selected. 
\begin{lemma} [Pool quality]
%	Consider a client selection protocol that are similar to the protocol in Algorithm \ref{alg:protocol}. 
%	Let $n$ be the number of clients,  $\alpha$ be the fraction of honest clients, Let $\gamma$ be the fraction of honest selected clients,
	Let $\calH$ be the set of honest clients in the set of selected clients $\calP$.
	For  $\epsilon > 0$, we have,
	\begin{align*}
		\Pr[\frac{\calH \cap \calP}{\calP} \ge \alpha(1-\epsilon)] \ge 1 - e^{-\Omega(nc - \log \kappa) }
	\end{align*}
	\label{lemma:pool0}
	where $n$ is the number of clients, $\alpha =  1 - \beta$ is the fraction of honest clients, and $c$ is the selection probability.
%	For any $\epsilon > 0$, with probability $1 - e^{-\Omega(n\mu)}$, the fraction of honest selected clients is at least $\alpha(1-\epsilon)$.
%\vspace{-0.2in}
\end{lemma}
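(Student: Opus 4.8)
The plan is to lower bound $|\calH \cap \calP|$ by the number of \emph{honest qualified} clients, upper bound $|\calP| - |\calH \cap \calP|$ by the number of \emph{adversarial qualified} clients, and then exploit the fact that qualification is an (almost) independent biased coin flip with bias $c$ per client to finish by Chernoff concentration. First I would pin down the two quantities. Since $\PoolVerify$ returns $1$ for a client $i$ only if its VRF output satisfies $\sigma_i < c2^{\kappa}$, every client in $\calP$ is qualified; conversely, by the dispute step together with the smart-contract enforcement of the final selection transaction (miners verify that all dispute transactions are reflected in $\Merkleroot_f$) and by persistence of the blockchain, every honest qualified client ends up in $\calP$ in the view of every honest client. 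Hence $|\calH \cap \calP| = H_q$, where $H_q$ is the number of honest qualified clients, and $|\calP \setminus \calH| \le A_q$, where $A_q$ is the number of qualified clients among the at most $\beta n$ colluding/Sybil clients (the server maximizes $A_q$ to hurt pool quality, so this is the worst case). Therefore $\frac{|\calH\cap\calP|}{|\calP|} \ge \frac{H_q}{H_q + A_q}$ whenever $\calP \neq \emptyset$.

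Next I would argue that $H_q$ and $A_q$ behave like binomials. By the randomness-extraction analysis of \cite{david2018ouroboros} and the chain-quality property of \cite{garay2015bitcoin}, among the $\kappa$ block headers hashed to form $\rand$ at least one is produced by an honest miner and is unpredictable before it is mined; modeling the hash as a random oracle, $\rand$ is then statistically close to uniform and is fixed only after registration, except with probability $e^{-\Omega(\kappa)}$. Because each client's key pair is committed at registration, VRF uniqueness forbids any ``grinding'' on $\rand$: a colluding client with a fixed $\pk_i$ has a single determined $\sigma_i$. Combining this with the pseudorandomness of the VRF (a hybrid over the $n$ evaluations costing $n\cdot\negl(\kappa)$), each client is qualified independently with probability $c$ up to a negligible additive error. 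In particular $\mathbb{E}[H_q] = (1-\beta)nc$ (using that the honest fraction is $\alpha = 1-\beta$, and noting a smaller adversarial set only increases $H_q$) while $\mathbb{E}[A_q] \le \beta nc$.

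Finally I would apply Chernoff bounds: for a parameter $\delta = \delta(\epsilon) > 0$, $\Pr[H_q \le (1-\delta)\alpha nc] \le e^{-\Omega(\alpha nc)}$, $\Pr[A_q \ge (1+\delta)\beta nc] \le e^{-\Omega(\beta nc)}$, and $\Pr[\calP = \emptyset] \le (1-c)^n \le e^{-nc}$. On the complement of all these events,
$$\frac{|\calH\cap\calP|}{|\calP|} \;\ge\; \frac{(1-\delta)\alpha nc}{(1-\delta)\alpha nc + (1+\delta)\beta nc} \;=\; \frac{(1-\delta)\alpha}{(1-\delta)\alpha + (1+\delta)\beta} \;\ge\; \alpha(1-\epsilon),$$
where the last inequality holds for $\delta$ small enough, since the middle expression tends to $\alpha$ as $\delta \to 0$ (using $\alpha + \beta = 1$). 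Summing the failure probabilities gives $e^{-\Omega(\kappa)} + e^{-\Omega(\alpha nc)} + e^{-\Omega(\beta nc)} + n\cdot\negl(\kappa) = e^{-\Omega(nc - \log\kappa)}$, because $\alpha$ and $\beta$ are constants bounded away from $0$ and $1$ and $n\cdot\negl(\kappa) = e^{-\Omega(\kappa - \log n)}$.

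I expect the main obstacle to be the rigorous justification that the adversary cannot bias the qualified set: both the unbiasedness and unpredictability of $\rand$ (leaning on the chain-quality argument of the cited works, in the random-oracle model) and the no-grinding property (VRF uniqueness together with the fact that keys are committed before $\rand$ is revealed), and then carefully threading together the computational, $\kappa$-parametrized error terms with the combinatorial, $n$-parametrized Chernoff terms so as to land exactly on the stated $e^{-\Omega(nc - \log\kappa)}$ bound.
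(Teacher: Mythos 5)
Your overall skeleton matches the paper's: you split the qualified set into honest and colluding parts, apply Chernoff to each, and use the dispute mechanism plus the fact that $\PoolVerify$ checks $\voutput_i < c2^{\kappa}$ to argue that the honest qualified clients are exactly the honest selected clients while every selected client is qualified, so the ratio over the qualified set lower-bounds the ratio over $\calP$. That part is sound and is essentially the paper's argument (the paper writes it as $\calH = \calH'$, $|\calP| \le |\calP'|$, hence $\Pr[|\calH|/|\calP| \le \alpha(1-\epsilon)] \le \Pr[|\calH'|/|\calP'| \le \alpha(1-\epsilon)]$).

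The genuine gap is your claim that grinding on $\rand$ is impossible. Committing the keys at registration only rules out grinding over \emph{public keys} (the attack on the baseline protocol); it does nothing to stop the adversary from grinding over the \emph{randomness itself}. The value $\rand$ is the hash of $\kappa$ block headers mined \emph{after} registration, and chain quality only guarantees that at least one of those headers is honest: the adversarial miners control the contents of up to $\kappa$ of them and can, by choosing what to publish, effectively select among up to $\kappa$ candidate values of $\rand$ after evaluating their own colluding clients' VRFs on each candidate. Consequently your bound $\Pr[A_q \ge (1+\delta)\beta nc] \le e^{-\Omega(\beta nc)}$ is not justified as stated; it must be union-bounded over the $\kappa$ candidates, giving $\kappa\, e^{-\Omega(nc)} = e^{-\Omega(nc - \log\kappa)}$ --- which is precisely where the $\log\kappa$ in the lemma's probability comes from. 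In your write-up the $\log\kappa$ only appears because you padded the final sum to the target expression, not because any individual term required it. (Your treatment of the honest side is fine without a union bound, for the reason the paper gives: the adversary cannot evaluate honest clients' VRFs, so its choice among candidate randomness values is independent of their qualification outcomes.)
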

%\vspace{-0.08in}
\begin{proof}
%	Let $\mu$ be the probability that the output of an VRF is smaller than $c$. 
	
%	\mt{what is $\alpha$?}
	
	Let $\calP' \supseteq	\calP$ be the set of qualified clients, i.e., the clients having VRF outputs smaller than $c2^\kappa$. 
	Let $\calH'$ and $\mathcal{M}'$ be the set of honest and colluding clients in $\calP'$, respectively.
%	Let $s_h$ be the number honest clients in $\calP$ and $s_m$ be the number of colluding clients in $\calP$. 
	
	We prove by bounding the number of qualified colluding clients. By the chain quality property of the blockchain \cite{garay2015bitcoin}, the adversary can create   at most $\kappa$ blocks among the last blocks used for creating the randomness. Thus, it has at most $\kappa$ randomness values to choose from. Using the Chernoff bound and union bound, for any $\epsilon' > 0$, we have
	\begin{equation*}
		\Pr[|\mathcal{M}'| \ge (1+\epsilon') n (1-\alpha) c] \le  \kappa e^{\Omega(nc)} = e^{-\Omega(nc - \log \kappa)}
	\end{equation*}
	For the honest clients, since the server cannot predict the outputs of the VRFs, thus, changing the randomness will not affect the probability that honest clients are selected.
	Using the Chernoff bound, for any $\epsilon' > 0$, we have,
	\begin{align*}
		\Pr[|\calH'| \le (1-\epsilon') n \alpha c] \le e^{-\Omega(nc)}
	\end{align*}
	
	By choosing $\epsilon'$ such that $\epsilon = \frac{1-\epsilon'}{1+\epsilon'}$, we have
	\begin{align*}
		\Pr\left[\frac{|\calH'|}{|\calP'|} \le \alpha(1-\epsilon)\right]
%		 &\le \Pr[|\calH'| \le (1-\epsilon') n \alpha c] \\ & \quad \quad +  	\Pr[|\mathcal{M}'| \ge (1+\epsilon') n (1-\alpha) c] \\
		&\le  e^{-\Omega(nc - \log \kappa)}
	\end{align*}
	
%	As each client has one chance to compute the VRF, the probability that a client is selected is $\mu$. Let $s_h$ be the number of honest 
	
%	Note that, based on the dispute, all honest clients in $S$ will be selected in the train round
%	Next, we will prove that all honest clients in $S$ will be selected in the train rounds. For each honest client $i$, we consider the two following cases:
%	\begin{itemize}
%		\item \emph{Case 1:} The server includes the client $i$ in the pool commitment. In this case, the client $i$ is selected.
%		\item \emph{Case 2:} The server does not include the client $i$ in the pool commitment. In this case, the server cannot provide the client $i$ a proof that the public key $\pk_i$ is included in the pool commitment. Thus, in the dispute resolution step, the client $i$ will submit a transaction to the blockchain. Then, the client $i$ is selected in the additional pool. 
%	\end{itemize}
%	Let $S$ be the set of clients in which the output of VRF is smaller than $c$. 
%	Using Chernoff bound, we have, with probability $1 - e^{-\Omega(n)}$, the fraction of honest clients in $S$ is at least $\alpha(1-\epsilon)$.
	
Recall that, the honest qualified clients are included either in the initial selection transaction or the final selection transaction (through dispute).
Thus, we have $\calH = \calH'$. Further, the selected clients must be qualified, i.e., $|\calP| \le |\calP'|$. Hence, we have $\frac{|\calH'|}{|\calP'|} \ge \frac{|\calH|}{|\calP|}$. Therefore, 
%	Based on the dispute, all honest clients in $\calP$ will be selected in the train round. Plus the selected clients must provide a proof that the output of their VRFs is smaller than $c$. Thus, 
%	the set of selected clients is a subset of $\calP$. Plus, all honest clients in $\calP$ is selected.
%	 since they can submit a transaction to the blockchain.
%	  Hence, we have,
	  \begin{align*}
	  \Pr\left[\frac{|\calH|}{|\calP|} \le \alpha(1-\epsilon)\right]
\le 	\Pr\left[\frac{|\calH'|}{|\calP'|} \le \alpha(1-\epsilon)\right] \le  e^{-\Omega(nc - \log \kappa)}
	  \end{align*}
	  
%	  with probability $1 - e^{-\Omega(n)}$, the fraction of honest selected clients is at least $\alpha(1-\epsilon)$.
\end{proof}
\vspace{-0.08in}

Finally, we show that even when the server can choose among up to $\kappa$ different randomess values, it has little chance to select a target client.

\begin{lemma}[Anti-targeting]
%	Consider a client selection protocol  in Algorithm \ref{alg:protocol}. 
Considering an honest client $i$,
%  Let $\pool$ be the set of selected clients.
we have
\begin{equation*}
	|\Pr[i \in \calP] - c| = e^{-\Omega(\kappa)},
\end{equation*}
where $c$ is the selection probability.
\label{lemma:targeting}
%\vspace{-0.25in}
\end{lemma}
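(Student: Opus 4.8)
\emph{Proof idea.} The plan is to reduce the claim to two ingredients: (i) the pseudorandomness of the VRF, which guarantees that for \emph{any} randomness value fixed independently of an honest client's secret key $\sk_i$, the probability that its VRF output falls below the threshold $c2^{\kappa}$ is $c \pm e^{-\Omega(\kappa)}$; and (ii) the common-prefix, chain-quality, liveness and persistence properties of the PoW blockchain, which force the server to commit to the extracted randomness $\rand$ \emph{before} it can observe whether an honest client is qualified. Together these say that, from the server's viewpoint, $\rand$ is effectively chosen non-adaptively with respect to client $i$, so grinding over the (polynomially many) reachable randomness values gives the server no leverage on $i$.

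\emph{Step 1: reduce "$i\in\calP$" to "$i$ is qualified under the finalized $\rand$".} First I would show $\Pr[i \in \calP] = \Pr[\voutput_i < c2^{\kappa}] \pm e^{-\Omega(\kappa)}$, where $(\voutput_i,\vproof_i)\gets\VRFprove_{\sk_i}(\rand)$ for the $\rand$ extracted from the finalized chain. On one side, $\PoolVerify$ returns $1$ for $i$ only if it is presented a pair with $\VRFverify(\pk_i,\voutput_i,\vproof_i)=1$ and $\voutput_i<c2^{\kappa}$; by the uniqueness/unforgeability of the VRF and the fact that the server does not know $\sk_i$, the only such value is the genuine output, so an unqualified honest $i$ is never placed in $\calP$ except with negligible forging probability. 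On the other side, if $i$ is qualified it sends $(\voutput_i,\vproof_i,\pk_i)$ to the server, and should the server omit it from the initial selection, the dispute transaction (carrying also $\MerklePath(\pk_i\cin\calU)$) forces inclusion in the final selection; since $\tau=\Omega(\kappa)$ is chosen so that transactions finalize within $\tau$ blocks, by liveness and persistence this succeeds except with probability $e^{-\Omega(\kappa)}$. This reduces the lemma to bounding $\Pr[\voutput_i<c2^{\kappa}]$.

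\emph{Step 2: the finalized $\rand$ is independent of $\sk_i$.} The randomness is the hash of $\kappa$ consecutive block headers of the previous round; by chain quality the colluding miners control only a bounded number of those headers, so only polynomially many candidate values $\rand_1,\dots,\rand_T$ are reachable by grinding, and — the delicate point — the server must publish its chosen chain before any honest client reveals a qualification message derived from $\rand$. If instead it published one chain, observed $i$'s response, and then re-orged to a chain carrying a different $\rand$, it would be reverting $\kappa$ blocks, which by common prefix happens with probability $e^{-\Omega(\kappa)}$. Conditioned on the complementary event, the server's choice of $\rand$ among the $\rand_j$ is a function of only public information and colluding keys — never of $\sk_i$ — and then the VRF pseudorandomness (applied through a hybrid that replaces $i$'s VRF outputs on all candidate inputs by truly uniform values) yields $\Pr[\voutput_i<c2^{\kappa}] = \lfloor c2^{\kappa}\rfloor/2^{\kappa} \pm \negl(\kappa) = c \pm e^{-\Omega(\kappa)}$. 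Combining with Step 1 gives $|\Pr[i\in\calP]-c| \le e^{-\Omega(\kappa)}$.

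\emph{Main obstacle.} The crux is Step 2: making rigorous that grinding over many randomness values cannot bias an honest client's selection. The danger is that if the server could pick $\rand$ \emph{after} learning which candidates make $i$ qualified, then with $T$ candidates it could push $\Pr[i\in\calP]$ toward $1-(1-c)^{T}$, far from $c$; ruling this out is exactly where the blockchain's common-prefix property (no $\kappa$-deep re-org) must be combined with the VRF's unpredictability, and the cleanest presentation is the hybrid/reduction sketched above — swap the honest client's VRF values for uniform randomness, observe that a choice of input made independently of those values selects $i$ with probability $c\pm 2^{-\kappa}$, and charge the difference either to a VRF distinguisher or to a common-prefix violation.
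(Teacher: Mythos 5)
Your proposal is correct and follows essentially the same route as the paper's own (very terse) proof: reduce membership in $\calP$ to the event $\voutput_i < c2^{\kappa}$ via the dispute mechanism and blockchain liveness/persistence, then use VRF unpredictability to argue that this event has probability $c$ up to a negligible term, regardless of which candidate randomness the adversary settles on. Your Step 2 is more careful than the paper, which simply asserts that grinding cannot help because the server cannot predict $i$'s VRF outputs; your explicit treatment of the adaptive re-org attack via the common-prefix property fills a detail the paper leaves implicit.
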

\begin{proof}
	As we have shown in the proof of Lemma \ref{lemma:pool0}, an honest client is selected if the output of its VRF is smaller than a threshold. As the adversary cannot predict the output of the VRF of the client $i$, for any randomness $\rand$, the outputs of the VRF of $i$ cannot be distinguished with a random number.
	Thus, with probability $c$, the client $i$ is qualified. If the blockchain is secure (with probability $1- e^{-\Omega({\kappa})}$), the qualified client $i$ is selected. Thus, the probability that the client $i$ is selected is at most $c + e^{-\Omega(\kappa)}$ and at least $c - e^{-\Omega(\kappa)}$.
% 	the client $i$ is selected with probability $c$.
%	 and $j$, for any randomness $\rand$, the outputs of the VRFs of $i$ and $j$ can not be distinguished with a random number. Thus, the probabilities, that $i$ and $j$ are selected, are the same. 
\end{proof}

Together, lemmas \ref{lemma:consistency}, \ref{lemma:pool0}, \ref{lemma:targeting}  yield the security proof of our protocol.
\begin{theorem}[Secure pool selection]
	The pool selection protocol in Algorithm \ref{alg:protocol} achieves pool quality, pool consistency, and anti-targeting properties. 
\end{theorem}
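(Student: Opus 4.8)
The plan is to assemble the theorem directly from the three lemmas already proved, pairing each with the corresponding clause of Definition~\ref{def:sec} and reconciling the shapes of the failure probabilities. I would first invoke Lemma~\ref{lemma:consistency} verbatim for \emph{pool consistency}: the probability that two honest clients $j_1,j_2$ hold valid proofs $\omega_{j_1}^{(i)},\omega_{j_2}^{(i)}$ witnessing opposite membership verdicts for some $i$ is at most $e^{-\Omega(\kappa)}$, which is exactly what the definition asks. The only thing to check here is that the $\PoolVerify(\cdot)$ function constructed at the start of Section~\ref{sec:analysis} is a legitimate instantiation of the predicate $\PoolVerify_\Pi$ required by Definition~\ref{def:sec}; this is immediate from its three-way output and the way the server's proof $\omega_j^{(j)}$ is formed.

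For \emph{$\gamma$-pool quality} I would fix an arbitrarily small $\epsilon>0$ and take $\gamma = \alpha(1-\epsilon)$ with $\alpha = 1-\beta$ the honest fraction. Lemma~\ref{lemma:pool0} then gives $\Pr[\,|\calH\cap\calP|/|\calP|\ge\gamma\,]\ge 1 - e^{-\Omega(nc-\log\kappa)}$. To match the $1 - e^{-\Omega(\kappa)}$ form of the definition, I would appeal to the protocol's parameter regime: the expected pool size $nc = m$ is taken at least linear in the security parameter (consistent with the choice $\tau = \Omega(\kappa)$ and the use of $\kappa$ fresh blocks for randomness extraction), so $nc - \log\kappa = \Omega(\kappa)$ and the bound becomes $e^{-\Omega(\kappa)}$. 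For \emph{anti-targeting} I would quote Lemma~\ref{lemma:targeting}: for every honest $i$, $|\Pr[i\in\calP]-c| = e^{-\Omega(\kappa)}$, which in particular satisfies the $\le e^{-\Omega(\kappa)}$ bound demanded by the definition. A union bound over the three ``bad'' events then shows all three properties hold simultaneously except with probability $e^{-\Omega(\kappa)}$, completing the proof.

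The main obstacle — the only step that is not pure bookkeeping — is the pool-quality reconciliation: one has to be explicit about when $e^{-\Omega(nc-\log\kappa)}$ is genuinely negligible in $\kappa$ (hence about the assumption $m = \Omega(\kappa)$, or, failing that, state the theorem with the sharper $e^{-\Omega(nc-\log\kappa)}$ term), and one must note that the guarantee holds only for $\gamma$ strictly below $\alpha$, since $\epsilon = 0$ is not attainable. Everything else is a direct citation of Lemmas~\ref{lemma:consistency},~\ref{lemma:pool0},~\ref{lemma:targeting} together with a single union bound.
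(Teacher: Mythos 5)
Your proposal is correct and follows essentially the same route as the paper, which simply assembles the theorem from Lemmas~\ref{lemma:consistency}, \ref{lemma:pool0}, and \ref{lemma:targeting}. Your extra care in reconciling the $e^{-\Omega(nc-\log\kappa)}$ exponent of the pool-quality lemma with the $e^{-\Omega(\kappa)}$ form in Definition~\ref{def:sec} (via the implicit assumption $m = nc = \Omega(\kappa)$) addresses a detail the paper leaves unstated, and is a worthwhile clarification rather than a deviation.
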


\section{Experiments}\label{sec:exp}
We evaluate the performance of our protocol and the (insecure) baseline protocol (section~\ref{sec:attack}). 
Further, we analyze the dispute cost associated with the server and the clients.

\textbf{Setup.} We assume a public blockchain, e.g., Avalanche or Solana, with Solidity smart contracts. We use the VRF in the Libsodium cryptographic library\footnote{\url{https://github.com/algorand/libsodium/tree/draft-irtf-cfrg-vrf-03	
}} and the VRF verification in Solidity at \url{https://github.com/witnet/vrf-solidity}. 

We conducted all experiments on a CentOS machine Intel(R) Xeon(R) CPU E7-8894 v4 2.40GHz. 
We report the performance of the protocols in terms of blockchain storage cost (in KB), blockchain computation cost (in gas), and CPU time for the server and clients.

%compare the performance of our protocol with the baseline protocol. 
We choose the number of clients among  $10k$, $100k$, and $1000k$ and set the selection probability  $c = 1\%$ of that. The number of  training rounds in the FL process is assumed to be $1,000$. 
%Here, we consider a federated learning with $100,000$ clients. The probability that a client is selected is $1/100$.

%\vspace{-0.1in}
\subsection{Performance} 
%\vspace{-0.05in}

%\noindent \emph{Storage and computation on blockchain.}  

%measure the storage and computation cost on the blockchain for registration and $1,000$ training rounds.
%of the baseline protocol and our protocol. 
\begin{figure}[htp!]
%    \vspace{-0.1in}
	\centering
	\begin{subfigure}{0.4\textwidth}
		\includegraphics[width=\linewidth]{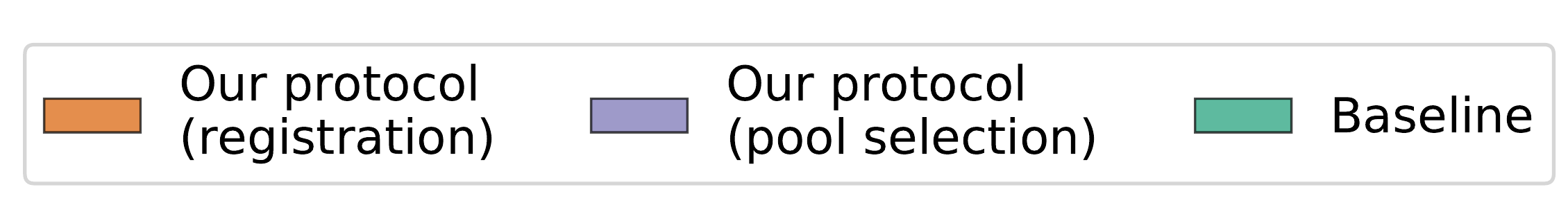}
	\end{subfigure}

	\begin{subfigure}{0.24\textwidth}
		\includegraphics[width=\linewidth]{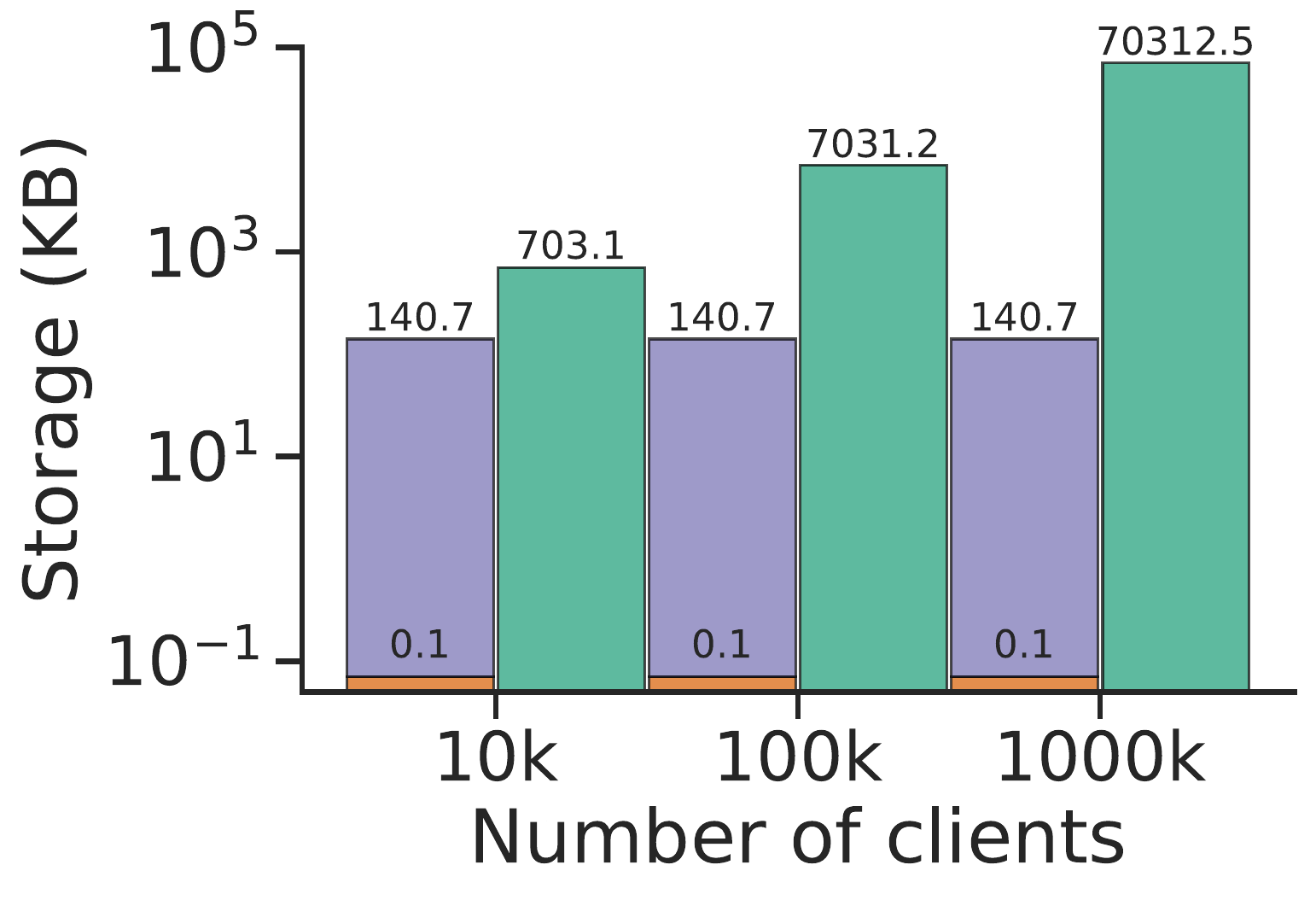}
		\caption{Storage \label{storage-reg}}
	\end{subfigure}
	\begin{subfigure}{0.24\textwidth}
		\includegraphics[width=\linewidth]{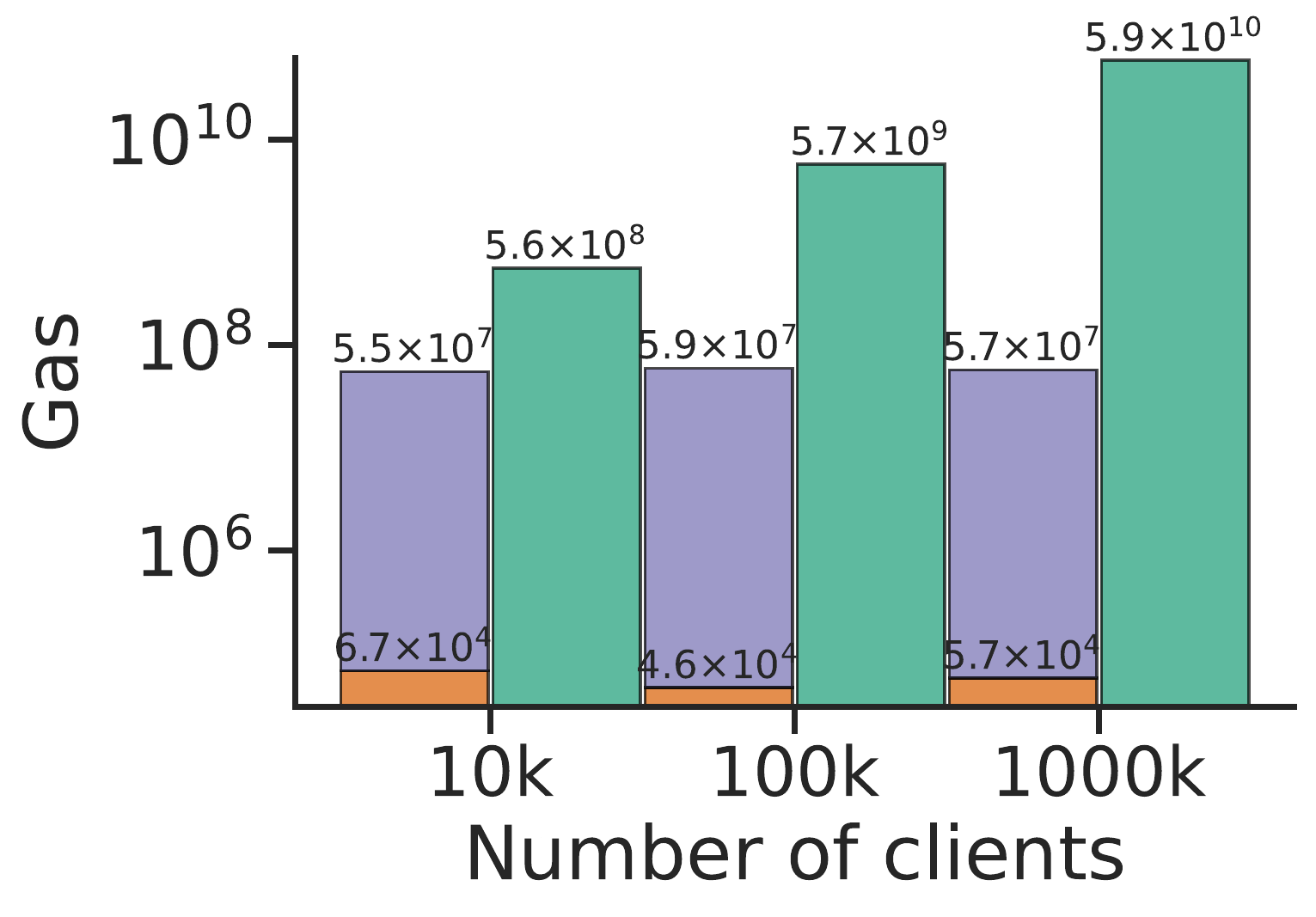}
		\caption{Computation cost (gas) \label{gas-reg}}
	\end{subfigure}
%	\subfloat[]{\includegraphics[width=0.48\linewidth]{figures/legend.pdf}}
%	\subfloat[][Storage. ]{\includegraphics[width=0.48\linewidth]{figures/storage-reg.pdf}\label{storage-reg}}
%	\hfil
%	\subfloat[][Computation cost (gas) ]{\includegraphics[width=0.48\linewidth]{figures/gas-reg.pdf}\label{gas-reg}}
	\caption{The storage  and computation costs on the blockchain for the registration and the pool selection in $1,000$ training rounds.}
	\label{fig:reg}
% \vspace{-0.2in}
\end{figure}

{
\noindent \emph{Storage and computation on blockchain.} 
As shown in Fig. \ref{fig:reg}, the storage and computation costs on blockchain for our protocol is significantly lower than those in the baseline protocol. Further, both the storage cost and the computation cost of our protocol remain constants when the number of clients increases. This contradicts the linear increase in the costs for the baseline protocol.  For example, when the number of clients $n = 1000k$, the total size of transactions in the baseline protocol and in our protocol are $70.3 MB$ and $0.1 KB$, respectively. Similarly, the total gas cost is $5.9\cdot 10^{10}$ in the baseline protocol and $5.7\cdot 10^7$ in our protocol. Thus, our protocol is up to several orders of magnitude more efficient than the (insecure) baseline protocol.

%\footnote{We use Avalanche as an example to show the transaction fee of our protocol an the baseline protocol. The estimated fee of the baseline protocol and our protocol are $1475$ AVAX ($\$110625$) and $1.5$ AVAX ($\$112.5$). 
%	Note that, we can replace Avalanche with other cryptocurrencies that allow Solidity as a smart contract language.

%in the baseline protocol, the total size of transactions is $7031.25 KB$, the total gas cost is $5.9\cdot 10^8$. In our protocol the 

%Each client in the baseline directly submits a transaction with its public key to the blockchain.
%In our protocol, to reduce the storage and computation on the blockchain, 
%%the server need to do some extra computation 
%the server gathers the public keys of the clients, constructs a Merkle tree, and then submits a transaction to the blockchain. 
%Thus, 
\noindent \emph{CPU time.} Only short computation times are needed for both the server and the clients in our protocol. For $n=1000k$ clients, the server in our protocol takes a $7.5s$ time to construct the Merkle tree during the registration, and a $5.8s$ time per round to verify the VRF proofs of the clients. All the computation is done using a single core. We note that this time can be reduced by several folds using parallel computing (not shown in here).
The computation for each client is also very short with  a negligible time in the registration, and a $0.031s$ time per training round.  

The baseline protocol incurs negligible computing times for both the server and the clients.

%\emph{Communication cost} The server and each client sends and receives $1.1GB$ and $1.13KB$ of data, respectively.

% and the server sends and receives $110.4 MB$ of data. 
%Compared with the baseline protocol, the server in our protocol needs to do some extra computation, i.e., constructing a Merkle tree before submitting a registration transaction. For the number of clients $n = 100k$, it takes $0.75s$ for the server to construct a Merkle tree. 
}
%This is not needed in the baseline protocol.
%Thus, the server need to perform some local computation 
%In contrast,  
%all clients in the baseline protocol submit transsactions that consist of their public key to the blockchain

%, and then submit a transaction of $72$ bytes to the blockchain. 

%As shown in Fig. \ref{fig:reg}, the storage and gas cost for the registration of our protocol on blockchain remains stable when the number of clients increases.
%On the other hand, as the number of clients increases, storage and gas cost for the registration of the baseline protocol also increases. When the number of clients $n = 100k$, the total size of transactions is $7031.25 KB$, the total gas cost is $5.9\cdot 10^8$.
%In the baseline protocol, each client submits a transaction containing a public key to the blockchain. The total size of the transaction is $7031.25 KB$.

%\mt{I suggest to remove this section B. In the journal version, we can add a new section: New Attacking Vector Analysis, which to discuss diff possible scenarios and prove that it's not possible. Here, focus on evaluate the cost, the performance, and re-affirm the theory, such as the model does satisfy those properties (defined earlier)... etc...}

%\vspace{-0.08in}
\subsection{Dispute cost}

We now measure the dispute cost of the server and the average dispute cost of each client. 
We consider a scenario in which the number of clients is $n = 1000k$, the selection probability $c = 1\%$, the probability that qualified clients submit a dispute transaction is $1\%$.
We report the cost of the server and the average cost of each client in $1,000$ training round.
% in which some clients submits dispute transactions.
%We break down the cost in our protocol into four categories: registration, initial selection, final selection, and dispute (see Fig \ref{fig:our}). 

%When the number of dispute clients increaes, 
%The cost of registration and    initial selection remain the same when the                                       
%in which an adversary that controls a subset of clients intentionally submits dispute transactions to degrade the protocol's performance. We compare the costs of attack, paid by the adversary, and the cost to response to the disputes, paid by the server.

\begin{table}[htp!]
	\centering
	%	\scriptsize
	\begin{tabular}{ c| rr}

		& Storage (KB) & Computation cost (gas) \\
				\hline
		The server & $70.35$ & $4.7 \times 10^9$\\
		Each client & $0.01$ & $2.1 \times 10^5$\\
%		\hline
	\end{tabular}
	\caption{The storage and computation costs for dispute. 
	}
	\label{tab:our}
\end{table}
As shown in Table. \ref{tab:our}, the average dispute cost of each client is much smaller than that, paid by the server. 
The storage costs for the server and each client are $140.7 KB$ and $0.01 KB$, respectively. Similarly, the gas cost for the server and each client are $4.7 \times 10^{9}$ and $2.1 \times 10^5$, respectively.

\section{Related work}\label{sec:rel}
\noindent\textbf{Secure aggregation in FL.} Leveraging secret sharing and random masking, Bonawitz et al. \cite{bonawitz2017practical} propose a secure aggregation method and apply it to deep neural networks to aggregate client-provided model updates. In \cite{aono2017privacy} and \cite{zhang2020batchcrypt}, the authors utilize homomorphic encryption to blindly aggregate the model updates into global models. These secure aggregation protocols can scale up to millions of devices, and are robust to clients dropping out. Generic secure MPC based on secret sharing that securely computes any function among multiple parties \cite{damgaard2012multiparty,ben2019completeness,lindell2015efficient} can also be used as secure aggregation in FL. However, they are not scalable enough due to the high complexity in both computation and communication.  %the decryption key is distributed to the users, and the server uses homomorphic encryption to blindly aggregate the model updates. However, the authors assume that there is no collusion between the server and users so that the server cannot learn the decryption key. Therefore, the system does not work under our security model where users can be malicious and collude with the server.

Although these protocols provide strong security guarantees with respect to concealing the local model updates from the server, they are only applicable to an honest-but-curious adversary. They assume that the server honestly follows the protocol, including the random client selection. We show that the server can easily manipulate the selection process to bypass the secure aggregation and learn the local model update of a victim. We also devise a verifiable random selection protocol as a countermeasure to prevent the server from manipulating the selection of participating clients, thereby maintaining the security guarantees of secure aggregation protocols.

\smallskip
\noindent\textbf{Integration of Blockchain and FL.} Recently, there have been multiple studies focusing on integrating the immutability and transparency properties of blockchain into FL. For instance, Bao et al. \cite{bao2019flchain} propose FLChain which is an auditable and decentralized FL system that can reward the honest clients and detect the malicious ones. Zhang et al. \cite{zhang2020blockchain} propose a blockchain-based federated learning approach for IoT device failure detection. Kang et al. \cite{kang2020reliable} develop a reputation management scheme using blockchain to manage and select reliable clients, thereby avoiding unreliable model updates. In \cite{kim2019blockchained,ma2020federated}, the authors utilize blockchain for the exchange and aggregation of local model updates without a central server.

The above-mentioned systems cannot be employed directly to address the biased selection attack because they are not designed specifically for protecting client model updates. Additionally, they are not compatible to be used with a secure aggregation protocol. Our approach is different in a way that we use blockchain as a source of randomness for the client selection protocol, such that it enforces the random selection of clients, making the biased selection attack infeasible.
% \vspace{-0.04in}
\section{Conclusion}\label{sec:con}
% \vspace{-0.04in}
In this paper, we have shown that using the secure aggregation protocols alone is not adequate to protect the local model updates from the server. Via our proposed biased selection attack, we have demonstrated that the server can manipulate the client selection process to learn the local model update of a victim, effectively circumventing the security guarantees of the secure aggregation protocols. To counter this attack and ensure privacy protection for the local model updates, we have proposed a verifiable client selection protocol using blockchain as a source of randomness. As a result, it enforces a random selection of clients in each training round, thereby preventing the server from manipulating the client selection process. We have proven its security against the proposed attack and analyzed its computation cost with Ethereum Solidity to show that it imposes negligible overhead on FL.

\section*{Acknowledgement}
This work was supported in part by the National Science Foundation under grants CNS-2140477 and CNS-2140411.
%\newpage
\bibliographystyle{plain}
\balance
\bibliography{FL}

\begin{thebibliography}{10}

\bibitem{abadi2016deep}
Martin Abadi, Andy Chu, Ian Goodfellow, H~Brendan McMahan, Ilya Mironov, Kunal
  Talwar, and Li~Zhang.
\newblock Deep learning with differential privacy.
\newblock In {\em Proceedings of the 2016 ACM SIGSAC conference on computer and
  communications security}, pages 308--318, 2016.

\bibitem{aono2017privacy}
Yoshinori Aono, Takuya Hayashi, Lihua Wang, Shiho Moriai, et~al.
\newblock Privacy-preserving deep learning via additively homomorphic
  encryption.
\newblock {\em IEEE Transactions on Information Forensics and Security},
  13(5):1333--1345, 2017.

\bibitem{bao2019flchain}
Xianglin Bao, Cheng Su, Yan Xiong, Wenchao Huang, and Yifei Hu.
\newblock Flchain: A blockchain for auditable federated learning with trust and
  incentive.
\newblock In {\em 2019 5th International Conference on Big Data Computing and
  Communications (BIGCOM)}, pages 151--159. IEEE, 2019.

\bibitem{ben2019completeness}
Michael Ben-Or, Shafi Goldwasser, and Avi Wigderson.
\newblock Completeness theorems for non-cryptographic fault-tolerant
  distributed computation.
\newblock In {\em Providing Sound Foundations for Cryptography: On the Work of
  Shafi Goldwasser and Silvio Micali}, pages 351--371. 2019.

\bibitem{bonawitz2017practical}
Keith Bonawitz, Vladimir Ivanov, Ben Kreuter, Antonio Marcedone, H~Brendan
  McMahan, Sarvar Patel, Daniel Ramage, Aaron Segal, and Karn Seth.
\newblock Practical secure aggregation for privacy-preserving machine learning.
\newblock In {\em proceedings of the 2017 ACM SIGSAC Conference on Computer and
  Communications Security}, pages 1175--1191, 2017.

\bibitem{ching2018opportunities}
Travers Ching, Daniel~S Himmelstein, Brett~K Beaulieu-Jones, Alexandr~A
  Kalinin, Brian~T Do, Gregory~P Way, Enrico Ferrero, Paul-Michael Agapow,
  Michael Zietz, Michael~M Hoffman, et~al.
\newblock Opportunities and obstacles for deep learning in biology and
  medicine.
\newblock {\em Journal of The Royal Society Interface}, 15(141):20170387, 2018.

\bibitem{dahlberg2016efficient}
Rasmus Dahlberg, Tobias Pulls, and Roel Peeters.
\newblock Efficient sparse merkle trees.
\newblock In {\em Nordic Conference on Secure IT Systems}, pages 199--215.
  Springer, 2016.

\bibitem{damgaard2012multiparty}
Ivan Damg{\aa}rd, Valerio Pastro, Nigel Smart, and Sarah Zakarias.
\newblock Multiparty computation from somewhat homomorphic encryption.
\newblock In {\em Annual Cryptology Conference}, pages 643--662. Springer,
  2012.

\bibitem{david2018ouroboros}
Bernardo David, Peter Ga{\v{z}}i, Aggelos Kiayias, and Alexander Russell.
\newblock Ouroboros praos: An adaptively-secure, semi-synchronous
  proof-of-stake blockchain.
\newblock In {\em Annual International Conference on the Theory and
  Applications of Cryptographic Techniques}, pages 66--98. Springer, 2018.

\bibitem{dodis2005verifiable}
Yevgeniy Dodis and Aleksandr Yampolskiy.
\newblock A verifiable random function with short proofs and keys.
\newblock In {\em International Workshop on Public Key Cryptography}, pages
  416--431. Springer, 2005.

\bibitem{fredrikson2015model}
Matt Fredrikson, Somesh Jha, and Thomas Ristenpart.
\newblock Model inversion attacks that exploit confidence information and basic
  countermeasures.
\newblock In {\em Proceedings of the 22nd ACM SIGSAC Conference on Computer and
  Communications Security}, pages 1322--1333, 2015.

\bibitem{garay2015bitcoin}
Juan Garay, Aggelos Kiayias, and Nikos Leonardos.
\newblock The bitcoin backbone protocol: Analysis and applications.
\newblock In {\em Annual international conference on the theory and
  applications of cryptographic techniques}, pages 281--310. Springer, 2015.

\bibitem{hao2019efficient}
Meng Hao, Hongwei Li, Xizhao Luo, Guowen Xu, Haomiao Yang, and Sen Liu.
\newblock Efficient and privacy-enhanced federated learning for industrial
  artificial intelligence.
\newblock {\em IEEE Transactions on Industrial Informatics}, 16(10):6532--6542,
  2019.

\bibitem{hard2018federated}
Andrew Hard, Kanishka Rao, Rajiv Mathews, Swaroop Ramaswamy, Fran{\c{c}}oise
  Beaufays, Sean Augenstein, Hubert Eichner, Chlo{\'e} Kiddon, and Daniel
  Ramage.
\newblock Federated learning for mobile keyboard prediction.
\newblock {\em arXiv preprint arXiv:1811.03604}, 2018.

\bibitem{kang2020reliable}
Jiawen Kang, Zehui Xiong, Dusit Niyato, Yuze Zou, Yang Zhang, and Mohsen
  Guizani.
\newblock Reliable federated learning for mobile networks.
\newblock {\em IEEE Wireless Communications}, 27(2):72--80, 2020.

\bibitem{khan2021dispersed}
Latif~U Khan, Yan~Kyaw Tun, Madyan Alsenwi, Muhammad Imran, Zhu Han, and
  Choong~Seon Hong.
\newblock A dispersed federated learning framework for 6g-enabled autonomous
  driving cars.
\newblock {\em arXiv preprint arXiv:2105.09641}, 2021.

\bibitem{kim2019blockchained}
Hyesung Kim, Jihong Park, Mehdi Bennis, and Seong-Lyun Kim.
\newblock Blockchained on-device federated learning.
\newblock {\em IEEE Communications Letters}, 24(6):1279--1283, 2019.

\bibitem{konevcny2016federated}
Jakub Kone{\v{c}}n{\`y}, H~Brendan McMahan, Daniel Ramage, and Peter
  Richt{\'a}rik.
\newblock Federated optimization: Distributed machine learning for on-device
  intelligence.
\newblock {\em arXiv preprint arXiv:1610.02527}, 2016.

\bibitem{krizhevsky2009learning}
Alex Krizhevsky, Geoffrey Hinton, et~al.
\newblock Learning multiple layers of features from tiny images.
\newblock 2009.

\bibitem{lindell2015efficient}
Yehuda Lindell, Benny Pinkas, Nigel~P Smart, and Avishay Yanai.
\newblock Efficient constant round multi-party computation combining bmr and
  spdz.
\newblock In {\em Annual Cryptology Conference}, pages 319--338. Springer,
  2015.

\bibitem{ma2020federated}
Chuan Ma, Jun Li, Ming Ding, Long Shi, Taotao Wang, Zhu Han, and H~Vincent
  Poor.
\newblock When federated learning meets blockchain: A new distributed learning
  paradigm.
\newblock {\em arXiv preprint arXiv:2009.09338}, 2020.

\bibitem{mcmahan2017communication}
Brendan McMahan, Eider Moore, Daniel Ramage, Seth Hampson, and Blaise~Aguera
  y~Arcas.
\newblock Communication-efficient learning of deep networks from decentralized
  data.
\newblock In {\em Artificial intelligence and statistics}, pages 1273--1282.
  PMLR, 2017.

\bibitem{nakamoto2008bitcoin}
Satoshi Nakamoto.
\newblock Bitcoin: A peer-to-peer electronic cash system.
\newblock {\em Decentralized Business Review}, page 21260, 2008.

\bibitem{posner2021federated}
Jason Posner, Lewis Tseng, Moayad Aloqaily, and Yaser Jararweh.
\newblock Federated learning in vehicular networks: opportunities and
  solutions.
\newblock {\em IEEE Network}, 35(2):152--159, 2021.

\bibitem{rieke2020future}
Nicola Rieke, Jonny Hancox, Wenqi Li, Fausto Milletari, Holger~R Roth, Shadi
  Albarqouni, Spyridon Bakas, Mathieu~N Galtier, Bennett~A Landman, Klaus
  Maier-Hein, et~al.
\newblock The future of digital health with federated learning.
\newblock {\em NPJ digital medicine}, 3(1):1--7, 2020.

\bibitem{salem2018ml}
Ahmed Salem, Yang Zhang, Mathias Humbert, Pascal Berrang, Mario Fritz, and
  Michael Backes.
\newblock Ml-leaks: Model and data independent membership inference attacks and
  defenses on machine learning models.
\newblock {\em arXiv preprint arXiv:1806.01246}, 2018.

\bibitem{shokri2017membership}
Reza Shokri, Marco Stronati, Congzheng Song, and Vitaly Shmatikov.
\newblock Membership inference attacks against machine learning models.
\newblock In {\em 2017 IEEE Symposium on Security and Privacy (SP)}, pages
  3--18. IEEE, 2017.

\bibitem{yang2018applied}
Timothy Yang, Galen Andrew, Hubert Eichner, Haicheng Sun, Wei Li, Nicholas
  Kong, Daniel Ramage, and Fran{\c{c}}oise Beaufays.
\newblock Applied federated learning: Improving google keyboard query
  suggestions.
\newblock {\em arXiv preprint arXiv:1812.02903}, 2018.

\bibitem{zhang2020batchcrypt}
Chengliang Zhang, Suyi Li, Junzhe Xia, Wei Wang, Feng Yan, and Yang Liu.
\newblock Batchcrypt: Efficient homomorphic encryption for cross-silo federated
  learning.
\newblock In {\em 2020 $\{$USENIX$\}$ Annual Technical Conference
  ($\{$USENIX$\}$$\{$ATC$\}$ 20)}, pages 493--506, 2020.

\bibitem{zhang2020blockchain}
Weishan Zhang, Qinghua Lu, Qiuyu Yu, Zhaotong Li, Yue Liu, Sin~Kit Lo, Shiping
  Chen, Xiwei Xu, and Liming Zhu.
\newblock Blockchain-based federated learning for device failure detection in
  industrial iot.
\newblock {\em IEEE Internet of Things Journal}, 8(7):5926--5937, 2020.

\end{thebibliography}
%\appendix
%\input{appendix}

\end{document}